\documentclass[oneside]{amsart}
\usepackage[T1]{fontenc}
\usepackage[latin9]{inputenc}
\usepackage[english]{babel}
\usepackage{verbatim}
\usepackage{float}
\usepackage{textcomp}
\usepackage{amscd}
\usepackage{amsthm}
\usepackage{amstext}
\usepackage{amssymb}
\usepackage{graphicx}
\usepackage{esint}
\usepackage[unicode=true]
 {hyperref}

\makeatletter

\providecommand{\tabularnewline}{\\}

\numberwithin{equation}{section}
\numberwithin{figure}{section}
\theoremstyle{plain}
\newtheorem{thm}{\protect\theoremname}
  \theoremstyle{definition}
  \newtheorem{defn}[thm]{\protect\definitionname}
  \theoremstyle{plain}
  \newtheorem{prop}[thm]{\protect\propositionname}
  \theoremstyle{remark}
  \newtheorem{rem}[thm]{\protect\remarkname}
  \theoremstyle{definition}
  \newtheorem{example}[thm]{\protect\examplename}
  \theoremstyle{plain}
  \newtheorem{lem}[thm]{\protect\lemmaname}
  \theoremstyle{plain}
  \newtheorem{cor}[thm]{\protect\corollaryname}

\def\g{{\mathcal G}}
\def\bR{{\mathbb R}}
\def\R{{\mathbb R}}
\def\N{{\mathbb N}}
\def\Op{\mathrm{Op}}

\makeatother

  \providecommand{\corollaryname}{Corollary}
  \providecommand{\definitionname}{Definition}
  \providecommand{\examplename}{Example}
  \providecommand{\lemmaname}{Lemma}
  \providecommand{\propositionname}{Proposition}
  \providecommand{\remarkname}{Remark}
\providecommand{\theoremname}{Theorem}

\begin{document}
\selectlanguage{english}

\title{quantizations of momentum maps and $G$-systems}
\begin{abstract}
In this note, we give an explicit formula for a family of deformation
quantizations for the momentum map associated with the cotangent lift
of a Lie group action on $\R^{d}$. This family of quantizations is
parametrized by the formal $G$-systems introduced in \cite{BI} and
allows us to obtain classical invariant Hamiltonians that quantize
without anomalies with respect to the quantizations of the action
prescribed by the formal $G$-systems.
\end{abstract}

\author{Benoit Dherin and Igor Mencattini}

\address{Benoit Dherin, Department of Mathematics, University of California,
Berkeley, CA 94720-3840, USA}

\email{dherin@math.berkeley.edu}

\address{Igor Mencattini, ICMC-USP Universidade de Sao Paulo, Avenida Trabalhador
Sao-carlense 400 Centro, CEP: 13566-590, Sao Carlos, SP, Brazil}

\email{igorre@icmc.usp.br}

\maketitle
\tableofcontents{}

\section{Introduction}

The concept of momentum map plays a fundamental role in the classical
description of hamiltonian dynamical systems (in finite and in infinite
dimension), see for example \cite{GSb}. The Marsden-Weinstein reduction
procedure on momentum map level sets (with all of its various generalizations)
is a powerful method to study dynamical systems with symmetries and
to construct new symplectic (Poisson, K\"ahler, hyper-K\"ahler and so
on) manifolds from old ones endowed with a Lie group action preserving
the relevant geometric structures. 

The quantum counterparts of momentum maps (which are special deformation
quantizations, introduced by Ping Xu in \cite{Xu}, of classical momentum
maps regarded as a Poisson maps) and the corresponding reduction procedure
should play a similar fundamental role in the study of quantum systems
with symmetries (see for example \cite{Fedosov}, \cite{Lu}, \cite{Xu},
\cite{La} and also the monograph \cite{LaM}). However, one of the
difficulty in the theory of quantum momentum maps is that explicit
formulas are hard to come by. 

In this paper, we give a such an explicit formula for a family of
deformation quantizations for the momentum map $J$ associated with
the cotangent lift $\tilde{\varphi}$ of an action $\varphi$ of Lie
group $G$ on $\R^{d}$ (Theorem \ref{thm:main}). The result is a
family of deformation quantizations (i.e unital algebra morphisms),
parametrized by the $G$-systems introduced in \cite{BI}, from the
Gutt star-algebra to the standard star-algebra on the cotangent bundle:
\begin{equation}
J^{a}:(C^{\infty}(\mathcal{G}^{*})[[\hbar]],\star_{G})\longrightarrow(C^{\infty}(T^{*}\R^{d})[[\hbar]],\star_{st})\label{eq:Ja}
\end{equation}
where $a$ is a formal $G$-system, that is, a Maurer-Cartan element
in a certain diferential graded algebra of formal amplitudes associated
with the action (see Section \ref{sub:Quantization-of-symmetries}
for a short reminder). 

These quantizations do not satisfy in general the additional conditions
defining quantum momentum maps as described in \cite{Xu} (i.e. that
the star-product on the range of (\ref{eq:Ja}) should be equivariant
with respect to the representation by pullbacks of the cotangent lift
action and that $\frac{i}{\hbar}[J^{a}(v),f]=\tilde{X}^{v}(f)$ must
hold for all $f\in C^{\infty}(T^{*}\R^{d})[[\hbar]]$ and $v\in\mathcal{G}$,
where $\tilde{X}^{v}$ is the fundamental vector field of the cotagent
lift action) but rather deformations of these conditions, controlled
by formal $G$-systems (Proposition \ref{prop:quantum_GSpace} and
Theorem \ref{thm:second}). 

These deformed conditions can be understood in terms of the quantizations
introduced in \cite{BI}: Namely a formal $G$-system $a$ associated
with an action of a Lie group $G$ on $\R^{d}$ produces a representation
of $G$ by formal operators $T_{g}^{a}$ (obtained as asymptotic expansions
of certain Fourier integral operators whose amplitudes are given by
the $G$-system $a$) on the space $C^{\infty}(\R^{d})[[\hbar]]$
of formal functions on $\R^{d}$ (playing the role of the quantum
Hilbert space $L^{2}(\R^{d})$ of states in the formal setting). This
quantization $T^{a}$ lifts to the space $C^{\infty}(T^{*}\R^{d})[[\hbar]]$
of formal functions on the cotangent bundle (playing the role of the
quantum algebra of observables in the formal setting), producing a
representation $\tilde{T}^{a}$ of $G$ on this space that deforms
the representation obtained by pullbacks of the cotangent lift action.

The standard star-product is always equivariant with respect to the
deformed cotangent lift representation $\tilde{T}^{a}$ (Proposition
\ref{prop:quantum_GSpace}). Moreover the deformed condition
\[
\frac{i}{\hbar}[J^{a}(v),f]=\tilde{t}_{v}^{a}f=\tilde{X}^{v}(f)+\mathcal{O}(\hbar),
\]
holds for all $f\in C^{\infty}(T^{*}\R^{d})[[\hbar]]$ and $v\in\mathcal{G}$,
where $\tilde{t}_{v}^{a}$ is the derivative of the lifted representation
$\tilde{T}^{a}$ at the group unit. 

As a by-product, we obtain a family of invariant classical Hamiltonians
$H_{f}^{a}=J^{a}f$, where $f$ in the center of $(C^{\infty}(\mathcal{G}^{*}),\{\,,\,\})$.
These invariant Hamiltonians quantize without anomalies with respect
to the action quantization $T^{a}$ (Theorem \ref{theo:inv}), i.e.,
the quantum Hamiltonians $\hat{H}_{f}^{a}$ are invariant with respect
to the quantized action: $T_{g}^{a}\hat{H}_{f}^{a}T_{g^{-1}}^{a}=\hat{H}_{f}^{a}$,
where $\hat{H}_{f}^{a}$ is the standard quantization of $H_{f}^{a}$
by pseudo-differential operators. $ $

\subsection*{Acknowledgments}

We thank Alberto Cattaneo, Giovanni Felder, Alberto Ibort, Alan Weinstein,
and Maciej Zworski for stimulating discussions, valuable feedback
and insights, as well as the hospitality of the UC Berkeley mathematics
department and the S\~ao Paulo University ICMC, where part of this
project was conducted. We are particularly grateful to Alberto Ibort
for suggesting to us the relation of our construction with quantum
invariant Hamiltonians. B.D. acknowledges partial support from FAPESP
grant 2010/15069-8 and 2010/19365-0.

\section{Setting and results}

In this section, we review how to quantize an action of a Lie group
$G$ on $\R^{d}$ using the $G$-systems introduced in \cite{BI}.
They are Maurer-Cartan elements in a certain differential graded algebra
of amplitudes. Then we recall the notion of quantum momentum maps
as defined in \cite{Xu}, and we conclude with a presentation of our
main results: Namely, Theorem \ref{thm:main} gives a family of deformation
quantization of the momentum map associated with the cotangent lift
of an action on $\R^{d}$ and Proposition \ref{prop:quantum_GSpace}
and Theorem \ref{thm:second} explain how these quantizations satisfy
a deformed version of Ping Xu's original definition.

\subsection{Quantization of symmetries and G-systems\label{sub:Quantization-of-symmetries}}

In \cite{BI}, we introduced a Differential Graded Algebra (DGA) of
amplitudes $(\mathcal{A}_{\varphi}^{\bullet},d,\star)$ associated
with a bounded action $\varphi$ of a Lie group $G$ on $\R^{d}$.
By bounded action, we mean that $\varphi_{g}:\R^{d}\rightarrow\R^{d}$
is a smooth map for each $g\in G$ such that $|\partial^{\alpha}\varphi_{g}^{i}(x)|$
($i=1,\dots,d$) is uniformly bounded for each multi-index $\alpha\in\N^{d}\backslash\{0\}$.
For each positive integer $k$, we set
\[
\mathcal{A}_{\varphi}^{k}:=\Big\{ a:G^{k}\rightarrow S_{2d}(1)\Big\},
\]
where $S_{2d}(1)$ is the space of bounded amplitudes, i.e., families
of smooth functions on $T^{*}\R^{d}$ depending on a parameter $\hbar\in[0,\hbar_{0})$
uniformly bounded on $T^{*}\R^{d}\times[0,\hbar_{0})$ as well as
all their derivatives (see \cite{M} for more details on amplitude
spaces). The differential $d:\mathcal{A}_{\varphi}^{k}\rightarrow\mathcal{A}_{\varphi}^{k+1}$
is given by
\[
(da)_{g_{1},\dots,g_{k+1}}=\sum_{i=1}^{k}(-1)^{i}a_{g_{1},\dots,g_{i}g_{i+1},\dots,g_{k+1}}.
\]

The associative product $\star:\mathcal{A}_{\varphi}^{k}\times\mathcal{A}_{\varphi}^{l}\rightarrow\mathcal{A}_{\varphi}^{k+l}$
is induced from the composition $T^{a}\circ T^{b}=T^{a\star b}$ of
certain Fourier Integral Operators (FIO) acting on $L^{2}(T^{*}\R^{d})$
associated with amplitudes in $\mathcal{A}_{\varphi}^{\bullet}$;
namely, for $a\in\mathcal{A}_{\varphi}^{k}$, we have 
\[
T_{g_{1},\dots,g_{k}}^{a}\psi(x):=\int\psi(\bar{x})a_{g_{1},\dots,g_{k}}(x,\bar{\xi})e^{\frac{i}{\hbar}\langle\bar{\xi},\varphi_{(g_{1}\dots g_{k})^{-1}}(x)-\bar{x}\rangle}\frac{d\bar{x}d\bar{\xi}}{(2\pi\hbar)^{d}}.
\]

\begin{defn}
A \textbf{$G$-system} is a \textbf{Maurer-Cartan element} in $\mathcal{A}_{\varphi}^{\bullet}$,
i.e., an element $a\in\mathcal{A}_{\varphi}^{1}$ satisfying the Maurer-Cartan
equation $da+a\star a=0$. 
\end{defn}

As explained in \cite{BI}, a $G$-system $a$ produces a representation
of $G$ on $L^{2}(T^{*}\R^{d})$. This representation quantizes the
cotangent lift $\tilde{\varphi}$ of the action on $T^{*}\R^{d}$
in the sense of semi-classical analysis (for instance, see symplectomorphism
quantization in \cite{EZ,M}), as we shall see in the next proposition:

\begin{prop}
Let $f$ be a suitably bounded smooth function (e.g. uniformly bounded
by a polynomial in $\xi$) and $a$ be $G$-system associated with
an action $\varphi$ of a Lie group $G$ on $\R^{d}$. Then we have
\begin{equation}
T_{g}^{a}\Op(f)T_{g^{-1}}^{a}=\Op(\tilde{T}_{g}^{a}f),\qquad\tilde{T}_{g}^{a}f=\tilde{\varphi}_{g}^{*}f+\mathcal{O}(\hbar),\label{eq:quant_condition}
\end{equation}
where $\Op(f)$ is the \textbf{standard quantization} of $f$ by pseudo-differential
operators (see \cite{EZ,M} for more details): i.e.,
\begin{equation}
\Op(f)\psi(x):=\int\psi(\bar{x})f(x,\bar{\xi})e^{\frac{i}{\hbar}\langle\bar{\xi},x-\bar{x}\rangle}\frac{d\bar{x}d\bar{\xi}}{(2\pi\hbar)^{d}}.\label{eq:pseudo_diffs}
\end{equation}
\end{prop}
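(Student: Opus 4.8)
The plan is to read \eqref{eq:quant_condition} as an instance of Egorov's theorem: identify $T_g^a$ as a semiclassical Fourier integral operator quantizing the cotangent lift $\tilde{\varphi}_g$, and then show that conjugating the pseudo-differential operator $\Op(f)$ by it produces again a pseudo-differential operator whose principal symbol is the geometric pullback. First I would record that, since $a$ solves the Maurer-Cartan equation $da+a\star a=0$, the relation $(da)_{g_1,g_2}=-a_{g_1g_2}$ together with $T_{g_1}^{a}T_{g_2}^{a}=T_{g_1,g_2}^{a\star a}$ forces $T_{g_1}^{a}T_{g_2}^{a}=T_{g_1g_2}^{a}$ and $T_{e}^{a}=\mathrm{Id}$; this is exactly the representation property established in \cite{BI}. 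In particular $T_{g^{-1}}^{a}=(T_g^a)^{-1}$, so the left-hand side of \eqref{eq:quant_condition} is a genuine conjugation and is well defined.

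Next I would extract the underlying canonical transformation from the phase $\phi(x,\bar{x},\bar{\xi})=\langle\bar{\xi},\varphi_{g^{-1}}(x)-\bar{x}\rangle$. The stationary-phase relations $\partial_{\bar{\xi}}\phi=0$, $\xi=\partial_x\phi$, namely $\bar{x}=\varphi_{g^{-1}}(x)$ and $\xi=(D\varphi_{g^{-1}}(x))^{T}\bar{\xi}$, exhibit $T_g^a$ as a semiclassical FIO whose canonical relation is the graph of the cotangent-lift symplectomorphism $\tilde{\varphi}_g$ of $T^*\R^d$. Integrating out $\bar{\xi}$ against the amplitude $a_g$ shows moreover that, to leading order, $T_g^a$ acts on $C^{\infty}(\R^d)$ as the geometric pullback $(T_g^{a}\psi)(x)=\psi(\varphi_{g^{-1}}(x))+\mathcal{O}(\hbar)$. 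I would emphasize here a robustness point that makes the leading term clean: because $T_{g^{-1}}^{a}$ is the exact inverse of $T_g^a$, the (possibly nontrivial) principal amplitude of $a_g$ cancels against that of $a_{g^{-1}}$ in the conjugation, so no amplitude factor survives and one is left with the purely geometric transformation of symbols at order $\hbar^0$.

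With this in hand the statement follows from the composition calculus for FIOs and pseudo-differential operators (Egorov's theorem; see \cite{EZ,M}). Concretely, I would compose the three oscillatory integrals defining $T_g^a\Op(f)T_{g^{-1}}^{a}$, and reduce the resulting phase by stationary phase, the critical point being nondegenerate because $\tilde{\varphi}_g$ is a diffeomorphism. On the level of canonical relations the composite is $\tilde{\varphi}_g\circ\mathrm{id}\circ\tilde{\varphi}_g^{-1}=\mathrm{id}$, which is precisely why the conjugation lands back in the pseudo-differential class rather than being a genuine FIO. Reading off the output one obtains the standard form \eqref{eq:pseudo_diffs} with some symbol $\tilde{T}_g^{a}f$, and the principal part of the stationary-phase expansion gives $\tilde{T}_g^{a}f=\tilde{\varphi}_g^{*}f+\mathcal{O}(\hbar)$; this simultaneously produces \eqref{eq:quant_condition} and defines the deformed operator $\tilde{T}_g^{a}$.

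The hard part will be the bookkeeping in the oscillatory-integral composition rather than the geometry. One must verify that the composed symbol stays in the bounded amplitude class $S_{2d}(1)$ and that the stationary-phase expansion has remainders that are uniformly $\mathcal{O}(\hbar)$ on $T^*\R^d\times[0,\hbar_0)$; this is exactly where the boundedness hypotheses on $\varphi$ (uniform bounds on all derivatives of $\varphi_g$) and on the $G$-system amplitudes are used, so that differentiating under the integral sign and controlling the non-stationary directions is legitimate. The identification of the principal symbol, by contrast, is dictated entirely by the canonical transformation $\tilde{\varphi}_g$ and is independent of the subleading terms of $a_g$, as the amplitude cancellation above makes explicit.
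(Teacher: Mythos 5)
Your proposal is correct and follows essentially the same route as the paper: compose the three oscillatory kernels, apply stationary phase at the critical point $\bar{x}=\varphi_{g^{-1}}(x)$, $\bar{\xi}=(T_{x}^{*}\varphi_{g})\xi$ to identify the leading symbol, and kill the residual amplitude factor $a_{g}\cdot a_{g^{-1}}$ using the exact identity $T_{g}^{a}T_{g^{-1}}^{a}=\mathrm{id}$ coming from the representation property of the $G$-system. The paper makes that last cancellation explicit by writing $T_{g}^{a}T_{g^{-1}}^{a}=\Op(g)$, invoking injectivity of $\Op$ to get $g=1$, and applying stationary phase once more to conclude $a_{g}(x,(T_{x}^{*}\varphi_{g})\xi)\,a_{g^{-1}}(\varphi_{g^{-1}}(x),\xi)=1$, which is exactly the step you assert.
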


\begin{proof}
A direct computation shows that $T_{g}^{a}\Op(f)T_{g^{-1}}^{a}=\Op(\tilde{T}_{g}^{a}f)$,
where
\[
(\tilde{T}_{g}^{a}f)(x,\xi)=\int a_{g}(x,\bar{\xi})a_{g^{-1}}(\tilde{x},\xi)f(\bar{x},\tilde{\xi})e^{\frac{i}{\hbar}S_{x,\xi}(\bar{x},\bar{\xi},\tilde{x},\tilde{\xi})}\frac{d\bar{x}d\bar{\xi}d\tilde{x}d\tilde{\xi}}{(2\pi\hbar)^{2d}},
\]
with the phase $S_{x,\xi}$ given by
\[
S_{x,\xi}(\bar{x},\bar{\xi},\tilde{x},\tilde{\xi})=\bar{\xi}(\varphi_{g^{-1}}(x)-\bar{x})+\tilde{\xi}(\bar{x}-\tilde{x})+\xi(\varphi_{g}(\tilde{x})-x).
\]
Computing the critical point of $S_{x,\xi}$ w.r.t. the integration
variables, we get
\begin{equation}
\bar{x}=\tilde{x}=\varphi_{g^{-1}}(x),\quad\bar{\xi}=\tilde{\xi}=(T_{x}^{*}\varphi_{g})\xi.
\end{equation}
Using the stationary phase theorem, we get the first term of the asymptotic
expansion of $\tilde{T}_{g}^{a}$; namely,
\begin{eqnarray*}
(\tilde{T}_{g}^{a}f)(x,\xi) & = & a_{g}(x,(T_{x}^{*}\varphi_{g})\xi)a_{g^{-1}}(\varphi_{g^{-1}}(x),\xi)f(\varphi_{g^{-1}}(x),(T_{x}^{*}\varphi_{g})\xi)+\mathcal{O}(\hbar).
\end{eqnarray*}
Now we prove that
\begin{equation}
a_{g}(x,(T_{x}^{*}\varphi_{g})\xi)a_{g^{-1}}(\varphi_{g^{-1}}(x),\xi)=1\label{eq:aa1}
\end{equation}
by doing a similar compution using the relation $T_{g}^{a}T_{g^{-1}}^{a}=\operatorname{id}$,
which holds because $T^{a}$ is a representation as shown in \cite{BI}.
Namely, a direct calculation yields $T_{g}^{a}T_{g^{-1}}^{a}=\Op(g)$,
where 
\begin{equation}
g(x,\xi)=\int a_{g}(x,\bar{\xi})a_{g^{-1}}(\bar{x},\xi)e^{\frac{i}{\hbar}F_{x,\xi}(\bar{x},\bar{\xi})}\frac{d\bar{x}d\bar{\xi}}{(2\pi\hbar)^{d}},\label{eq:g}
\end{equation}
with $F_{x,\xi}(\bar{x},\bar{\xi})=\bar{\xi}(\varphi_{g^{-1}}(x)-\bar{x})+\xi(\varphi_{g}(\bar{x})-x)$.
By injectivity of $\Op$, we have that $g=1$. Using again, as above,
the stationary phase theorem on (\ref{eq:g}), we obtain (\ref{eq:aa1}). 
\end{proof}

Standard quantization defines the \textbf{standard product} of (suitably
bounded, see \cite{EZ,M}) smooth functions on the cotangent bundle:
i.e.,
\begin{equation}
\Op(f\star_{st}g)=\Op(f)\circ\Op(g),\label{eq:standard_product}
\end{equation}

\begin{rem}
\label{rem:formal_version}Observe that both pseudo-differential operators
(\ref{eq:pseudo_diffs}) and the standard product (\ref{eq:standard_product})
can be defined on the space $C^{\infty}(T^{*}\R^{d})[[\hbar]]$ of
formal power series in $\hbar$ with coefficients in the smooth functions
on $T^{*}\R^{d}$ by considering asymptotic expansions of both (\ref{eq:pseudo_diffs})
and (\ref{eq:standard_product}) in the limit $\hbar\rightarrow0$
(again see \cite{EZ,M}). In particular, Equation (\ref{eq:quant_condition})
holds in this formal context, and $\tilde{T}_{g}^{a}$ is now a formal
operator on $C^{\infty}(T^{*}\R^{d})[[\hbar]]$. 
\end{rem}
In this paper, we will be mostly concerned with the formal version
of $\mathcal{A}_{\varphi}^{\bullet}$, also introduced and discussed
in more details in \cite{BI}. Let us briefly outline the construction.
Instead of $\mathcal{A}_{\varphi}^{\bullet}$, we will consider the
DGA $(\mathcal{P}_{\varphi}^{\bullet},d,\star)$ of formal amplitudes.
The construction is similar to the bounded amplitude case, and $\mathcal{P}_{\varphi}^{\bullet}$
can be regarded as the asymptotics of $\mathcal{A}_{\varphi}^{\bullet}$
in the limit $\hbar\rightarrow0$. The formal amplitudes in $\mathcal{P}_{\varphi}^{k}$
are formal power series in $\hbar$ of the form
\[
a_{g_{1},\dots,g_{k}}(x,\xi)=P_{g_{1},\dots,g_{k}}^{0}(x)+\sum_{n\geq1}\hbar^{n}P_{g_{1},\dots,g_{k}}^{n}(x,\xi),
\]
where $P_{g_{1},\dots,g_{k}}^{n}(x,\xi)$ is a polynomial of degree
at most $n$ in $\xi$ with coefficients in the smooth functions on
$\R^{d}$. The corresponding operator $T^{a}$ acts on the space $C^{\infty}(\R^{d})[[\hbar]]$
of formal functions (i.e. formal power series in $\hbar$ with coefficients
in the smooth functions on $\R^{d}$):
\[
T_{g_{1},\dots,g_{k}}^{a}\psi(x)=P^{0}(x)\psi(\varphi_{(g_{1}\cdots g_{k})^{-1}}(x))+\sum_{n\geq1}\hbar^{n}P(x,\frac{1}{i}\partial)\psi_{\Big|\varphi_{(g_{1}\cdots g_{k})^{-1}(x)}}
\]

\begin{defn}
A \textbf{formal $G$-system} is a Maurer-Cartan element in $\mathcal{P}_{\varphi}^{\bullet}$. 
\end{defn}
Similarly to the bounded case, a formal $G$-system $a$ defines a
formal representation $T^{a}$ of $G$ on $C^{\infty}(\R^{d})[[\hbar]]$
deforming the representation by pullbacks,
\begin{equation}
T_{g}^{a}\psi(x)=P_{0}(x)\psi(\varphi_{g^{-1}}(x))+\mathcal{O}(\hbar),\label{eq:formal_rep_on_conf}
\end{equation}
and a corresponding representation $\tilde{T}^{a}$ of $G$ on $C^{\infty}(T^{*}\R^{d})[[\hbar]]$
defined by (\ref{eq:quant_condition}) (see also Remark \ref{rem:formal_version}),
which deforms the representation by cotangent lift pullbacks,
\begin{equation}
\tilde{T}_{g}^{a}f(x)=\tilde{\varphi}_{g^{-1}}^{*}f+\mathcal{O}(\hbar),\label{eq:formal_rep_on_phase}
\end{equation}
where $\tilde{\varphi}_{g}$ is the cotangent lift of $\varphi_{g}$. 
\begin{rem}
There always exists a formal $G$-system; namely, the trivial one:
$a=1$. For this formal $G$-system, the induced representation $T^{a=1}$
on $C^{\infty}(\R^{d})[[\hbar]]$ $ $ is exactly the representation
by pullbacks $\varphi_{g^{-1}}^{*}$ of the action. However, as exemplified
by the Egorov Theorem (see \cite{EZ,M}), one will \textit{not} have
in general that the induced representation $\tilde{T}^{a=1}$ on $C^{\infty}(T^{*}\R^{d})[[\hbar]]$
is the representation by pullbacks $\tilde{\varphi}_{g^{-1}}^{*}$
of the cotangent lift action, but rather, already in the trivial case,
a deformation of it. 
\end{rem}

\subsection{Quantum momentum maps}

In this paragraph, we recall the notion of classical and quantum momentum
maps (we refer the reader to \cite{BFFLS,Burzstyn,Xu} for more details).

Suppose we have an hamiltonian action $\varphi$ of a Lie group $G$
on a symplectic manifold $M$ admitting a momentum map, which, in
general, can be defined as a smooth map $J:M\rightarrow\mathcal{G}^{*}$
such that the hamiltonian vector field with hamiltonian $J^{*}(v)$,
for $v\in\mathcal{G}$ seen as a linear function on $\mathcal{G}^{*}$,
coincides with the fundamental vector field $X^{v}$. Moreover with
require $J$ to be equivariant with respect to symplectic action of
$G$ on the domain and the coadjoint action of $G$ on the range. 

This equivariance implies that $J$ is a Poisson map from $T^{*}\R^{d}$
equipped with the symplectic Poisson bracket to $\mathcal{G}^{*}$
equipped with the Kirillov-Kostant Poisson bracket. The associated
pullback map is thus a Lie algebra morphism
\[
J^{*}:(C^{\infty}(\mathcal{G}^{*}),\{\,,\,\})\longrightarrow(C^{\infty}(M),\{\,,\,\}),
\]
which, restricted to the linear functions on $\mathcal{G}^{*}$, yields
a representation of the Lie algebra $\mathcal{G}$ on the Lie algebra
of classical observables, i.e. $(C^{\infty}(M),\{\,,\,\})$. 

The quantum picture in deformation quantization starts by deforming
the classical Lie algebra in the domain and range of $J^{*}$ into
quantum algebras. Then one deforms $J^{*}$ into a unital algebra
morphism between these quantum algebras, which, similarly to the classical
case, yields a representation of $\mathcal{G}$ into the quantum algebra
of observables quantizing the range.

Let us recall some basic definitions:

\begin{defn}
Let $A$ be a Poisson manifold with Poisson bracket $\{\,,\,\}_{A}$.
A \textbf{deformation quantization} of $A$ is a \textbf{star-product}
$\star_{A}$ on $C^{\infty}(A)[[\epsilon]]$, i.e. a unital associative
product (for which the constant function $1$ is the unit) of the
form
\[
f\star_{A}g=fg+\sum_{n\geq1}\epsilon^{n}B_{n}(f,g),
\]
where the $B_{n}$'s are bidifferential operators such that the quantum
commutator $[\,,\,]_{\star}$ is a deformation of the Poison bracket:
$\frac{1}{\epsilon}[f,g]_{\star}=\{f,g\}+\mathcal{O}(\epsilon)$.
Observe that the formal parameter $\epsilon$ often is taken to be
$\epsilon=\frac{\hbar}{i}$ in concrete example. 
\end{defn}

One natural choice for the quantization of the momentum map domain
is the \textbf{Gutt star-product} $\star_{G}$. It comes from transporting
the associative product on the universal enveloping algebra of $\mathcal{G}$
to the polynomials on $\mathcal{G}^{*}$ via the symmetrization map
(see \cite{Gu}). Another definition of this product is via the asymptotic
expansion of a FIO (see \cite{BA} for instance), this is the definition
we are going to use here: Let $f,g\in C^{\infty}(\mathcal{G}^{*})[[\hbar]]$
, then the Gutt star product $f\star_{G}g$ is the asymptotic expansion
in the limit $\hbar\rightarrow0$ of the integral:
\begin{equation}
(f\star_{G}g)(\theta)=\int f(\theta_{1})g(\theta_{2})e^{\frac{i}{\hbar}\langle\theta,\text{BCH}(v_{1},v_{2})-\langle\theta_{1},v_{1}\rangle-\langle\theta_{2},v_{2}\rangle}\frac{dv_{1}d\theta_{1}dv_{2}d\theta_{2}}{(2\pi\hbar)^{\text{dim}\g}},
\end{equation}
where $\text{BCH}(v_{1},v_{2})$ is the BCH formula. 

One good feature of this star-product is that for two Lie algebra
elements $v,w\in\mathcal{G}$, which we regard as linear functions
on $\mathcal{G}^{*}$, we have
\[
\frac{i}{\hbar}[v,w]_{\star_{G}}=[v,w]_{\mathcal{G}}.
\]
This property allows us to obtain representations of $\mathcal{G}$
into the quantum algebra quantizing $(C^{\infty}(M),\{\,,\,\})$ from
classical momentum map deformation quantizations having for range
the Gutt star-algebra. 

\begin{defn}
Suppose we have two star-products $\star_{A}$ and $\star_{B}$ quantizing
the Poisson algebras $(C^{\infty}(A),\{\,,\,\}_{A})$ and $(C^{\infty}(B),\{\,,\,\}_{B})$,
respectively. A \textbf{deformation quantization }$\hat{\phi}$\textbf{
}from $(C^{\infty}(A)[[\epsilon]],\star_{A})$ to $(C^{\infty}(B)[[\epsilon]],\star_{B})$
\textbf{of a Poisson} map $\phi$ from $B$ to $A$ is a unital algebra
morphism of the form
\[
\hat{\phi}f=\phi^{*}f+\sum_{n\geq1}\epsilon^{n}D_{n}(f),
\]
where the $D_{n}$'s are differential operators. Again, in concrete
examples, one often has $\epsilon=\frac{\hbar}{i}$.
\end{defn}

In \cite{Xu}, Ping Xu introduced the notion of a quantum momentum
map, which is a special deformation quantization of the classical
momentum map regarded as Poisson map. Let us recall his definition,
which involves the notion of quantum $G$-spaces. 

\begin{defn}
\label{def: QQM}(A version of Ping Xu's definition \cite{Xu}). Let
$M$ be a symplectic manifold with a hamiltonian action $\varphi$
of $G$ on $M$ admitting a momentum map $J.$ 

A star-product $\star$ on $C^{\infty}(M)[[\hbar]]$ is \textbf{$G$-equivariant}
if the pullback action $\varphi^{*}$ acts on it by unital algebra
morphisms. The data of a hamiltonian action together with a $G$-equivariant
star-product as above is called a \textbf{quantum $G$-space.} 

A \textbf{quantum momentum map,} quantizing $J$, is a deformation
quantization of $J$ having for domain the Gutt star-algebra and such
that

(1) its range $(C^{\infty}(M)[[\hbar]],\star)$ is a quantum $G$-space,

(2) the condition $\frac{i}{\hbar}[Q(J)(v),f]=X^{v}(f)$ must hold
for all $f\in C^{\infty}(M)[[\hbar]]$. 
\end{defn}

\begin{rem}
The original definition of a quantum momentum map is that of a unital
algebra morphism from the universal enveloping algebra $\mathcal{U}(\mathcal{G}_{\hbar})$
(where $\mathcal{G}_{\hbar}$ is the Lie algebra with Lie bracket
rescaled by a $\hbar$) to a quantum $G$-space $(C^{\infty}(M)[[\hbar]],\star)$
(satisfying also Condition (2)). As already noticed in \cite{Xu},
one can equivalently use the Gutt star-algebra $(C^{\infty}(\mathcal{G}^{*})[[\hbar]],\star_{G})$
as domain for the quantum momentum map. 
\end{rem}

\subsection{Main results}

One difficulty in the theory of quantum momentum maps is that explicit
examples and formulas are hard to come by (except notably for \cite{Hamachi}).
The main result of this note consists in an explicit formula for a
family of deformation quantizations (parametrized by formal $G$-systems)
for the momentum map associated with the cotangent lift of a Lie group
action on $\R^{d}$. Although our quantizations do not satisfy Conditions
(1) and (2) of Ping Xu's original definition (Definition \ref{def: QQM}),
they satisfy deformations of them, controlled by formal $G$-systems.
Let us recall some terminology. 

A smooth action $\varphi$ of a Lie group $G$ on $\R^{d}$ determines,
via cotangent lift, an Hamiltonian action $\tilde{\varphi}$ of $G$
on the cotangent bundle $T^{\ast}\bR^{d}$, which we identify with
$\R^{2d}=\R_{x}^{d}\oplus\R_{\xi}^{d}$ endowed with the canonical
symplectic form $\omega=\sum_{i}d\xi^{i}\wedge dx^{i}$. This action
has a momentum map $J:T^{*}\R^{d}\rightarrow\mathcal{G}^{*}$, where
$\mathcal{G}^{*}$ is the dual of the Lie algebra $\mathcal{G}$ of
the Lie group $G$. It is given by
\begin{equation}
J(x,\xi)=-\langle\xi,X^{\cdot}(x)\rangle\label{eq:cmomap}
\end{equation}
where $X^{\cdot}:\mathcal{G}\rightarrow\operatorname{Vect}(\R^{d})$
is the induced infinitesimal action, i.e., $X^{v}(x)$ is the fundamental
vector field associated with the element $v\in\g$. The sign in (\ref{eq:cmomap})
comes from choosing the symplectic form on the cotangent bundle to
be $\omega$ as above (instead of $-\omega$). 

Let us state here the main theorem, which we will prove later on. 

\begin{thm}
\label{thm:main}Let $a\in\mathcal{P}_{\varphi}^{1}$ be a formal
$G$-system associated with a smooth action $\varphi$ of a Lie group
$G$ on $\R^{d}$. The asymptotic expansion in the limit $\hbar\rightarrow0$
of the map

\begin{equation}
J^{a}(u)(x,\xi)=e^{-\frac{i\langle x,\xi\rangle}{\hbar}}\int_{\g\times\g^{\ast}}^{\operatorname{formal}}u(\theta)a_{\exp(v)}(x,\xi)e^{\frac{i}{\hbar}S_{(x,\xi)}(\langle\theta,v\rangle)}\frac{dvd\theta}{(2\pi\hbar)^{\dim G}}\label{eq:moment}
\end{equation}
where

\begin{equation}
S_{(x,\xi)}(\theta,v)=\langle\xi,\varphi_{\exp(-v)}(x)\rangle-\langle\theta,v\rangle\label{eq:phase}
\end{equation}
is a deformation quantization of the momentum map $J$ above from
$C^{\infty}(\g^{\ast})[[\hbar]],$ endowed with the Gutt star-product
to $C^{\infty}(T^{*}\R^{d})[[\hbar]]$ endowed with the standard star-product.
The integral sign $\int^{\textrm{formal}}$ means that (\ref{eq:moment})
is identified its asymptotic expansion in $\hbar$ in the limit $\hbar\rightarrow0$,
as prescribed by the stationary phase theorem (see remark below). 
\end{thm}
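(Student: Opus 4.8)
The plan is to establish the two defining properties of a deformation quantization of the Poisson map $J$: that $J^{a}$ is a well-defined formal series with the correct classical limit $J^{a}(u)=J^{*}u+\mathcal{O}(\hbar)$ (together with unitality $J^{a}(1)=1$), and that $J^{a}$ is a unital algebra morphism, $J^{a}(u\star_{G}w)=J^{a}(u)\star_{st}J^{a}(w)$. First I would treat well-definedness and the classical limit by applying the stationary phase theorem to (\ref{eq:moment}). Since the prefactor $e^{-i\langle x,\xi\rangle/\hbar}$ only shifts the effective phase by the constant $-\langle x,\xi\rangle$, the critical points are those of $S_{(x,\xi)}$ in $(\theta,v)$. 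From $\partial_{\theta}S=-v$ one gets $v=0$, and from $\partial_{v}S|_{v=0}=-\langle\xi,X^{\cdot}(x)\rangle-\theta$ one gets $\theta=-\langle\xi,X^{\cdot}(x)\rangle=J(x,\xi)$ by (\ref{eq:cmomap}). The Hessian is nondegenerate because its mixed $\theta$-$v$ block equals $-\mathrm{Id}$, so stationary phase produces a genuine element of $C^{\infty}(T^{*}\R^{d})[[\hbar]]$. At the critical point the phase value $\langle\xi,x\rangle$ cancels the prefactor, the amplitude reduces to $a_{\exp(0)}=a_{e}$, and the leading term becomes $a_{e}(x,\xi)\,u(J(x,\xi))$. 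Unitality and the correct leading coefficient then both follow once one shows $a_{e}=1$: this holds exactly because the Maurer--Cartan equation is equivalent to $T^{a}$ being a representation, whence $T_{e}^{a}=\Op(a_{e})=\mathrm{id}$ and injectivity of $\Op$ forces $a_{e}=1$. Setting $u=1$ and integrating out $\theta$ in (\ref{eq:moment}) then gives $J^{a}(1)=a_{e}=1$.

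The morphism property is the heart of the argument, and the key reduction is to push everything through the standard quantization $\Op$. Composing (\ref{eq:pseudo_diffs}) with (\ref{eq:moment}) and collecting the $\bar\xi$- and $y$-dependent exponentials, the prefactor cancels the extra phase and the inner integral is recognized exactly as $T_{\exp(v)}^{a}\psi$; this yields the clean identity
\[
\Op\bigl(J^{a}(u)\bigr)=\int u(\theta)\,T_{\exp(v)}^{a}\,e^{-\frac{i}{\hbar}\langle\theta,v\rangle}\,\frac{dv\,d\theta}{(2\pi\hbar)^{\dim G}}=:\Phi^{a}(u).
\]
Thus $J^{a}$ is, up to $\Op$, the operator-valued $\hbar$-Fourier transform of $u$ integrated against the representation $\exp(v)\mapsto T_{\exp(v)}^{a}$. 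It then suffices to prove that $\Phi^{a}$ intertwines the Gutt product with operator composition, $\Phi^{a}(u)\Phi^{a}(w)=\Phi^{a}(u\star_{G}w)$.

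I would verify this last identity directly. On one side, using that $T^{a}$ is a representation (the Maurer--Cartan property, as in \cite{BI}), $T_{\exp(v_{1})}^{a}T_{\exp(v_{2})}^{a}=T_{\exp(\mathrm{BCH}(v_{1},v_{2}))}^{a}$, so
\[
\Phi^{a}(u)\Phi^{a}(w)=\int u(\theta_{1})w(\theta_{2})\,T_{\exp(\mathrm{BCH}(v_{1},v_{2}))}^{a}\,e^{-\frac{i}{\hbar}(\langle\theta_{1},v_{1}\rangle+\langle\theta_{2},v_{2}\rangle)}\,\frac{dv_{1}\,d\theta_{1}\,dv_{2}\,d\theta_{2}}{(2\pi\hbar)^{2\dim G}}.
\]
On the other side, inserting the BCH integral defining $u\star_{G}w$ into $\Phi^{a}(u\star_{G}w)$ and integrating out $\theta$ produces a delta enforcing $v=\mathrm{BCH}(v_{1},v_{2})$; integrating out $v$ then yields precisely the same expression. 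Finally, $\Op(f\star_{st}g)=\Op(f)\Op(g)$ from (\ref{eq:standard_product}) gives $\Op(J^{a}(u))\Op(J^{a}(w))=\Op(J^{a}(u)\star_{st}J^{a}(w))$, while $\Phi^{a}(u\star_{G}w)=\Op(J^{a}(u\star_{G}w))$; injectivity of $\Op$ then yields the desired morphism identity.

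The main obstacle I anticipate is not the algebra, which is forced by the representation property and the BCH shape of $\star_{G}$, but the analytic bookkeeping: all of the above must be read as identities of formal asymptotic expansions in the framework $\mathcal{P}_{\varphi}^{\bullet}$, so the Fubini interchanges, the $\theta$-integration producing the $\delta$ on $v=\mathrm{BCH}(v_{1},v_{2})$, and the change of variables must be justified order by order in $\hbar$ via stationary phase. This is exactly the same type of oscillatory-integral composition already carried out in the proof of the Proposition above (where $T_{g}^{a}\Op(f)T_{g^{-1}}^{a}=\Op(\tilde T_{g}^{a}f)$) and in the derivation of $T^{a}\circ T^{b}=T^{a\star b}$ in \cite{BI}, so the work amounts to tracking phases and Hessians carefully rather than to any new analytic input.
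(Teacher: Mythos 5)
Your proposal is correct and follows essentially the same route as the paper: the classical limit and unitality come from the leading term of the stationary-phase expansion at the critical point $(v,\theta)=(0,J(x,\xi))$, and the morphism property is reduced via $\Op$ to the representation identity $T^{a}_{\exp(v)}T^{a}_{\exp(w)}=T^{a}_{\exp(\mathrm{BCH}(v,w))}$ combined with the BCH-integral form of the Gutt star-product and Fourier superposition (the paper packages this as three lemmas on the exponentials $e_{v}$, which your operator $\Phi^{a}$ merges into a single formula). The only point you gloss over is that the leading stationary-phase prefactor $e^{i\pi\operatorname{sign}(B)/4}/\sqrt{|\det B|}$ actually equals $1$: besides $|\det B|=1$ this requires $\operatorname{sign}\bigl(D^{2}S_{x,\xi}(v_{0},\theta_{0})\bigr)=0$, which the paper establishes by a homotopy argument deforming $\xi$ to $0$, where the Hessian becomes purely off-diagonal.
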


\begin{rem}
\textit{\label{rem:Analytical-meaning}Analytical meaning of (\ref{eq:moment})}.
The phase $S_{x,\xi}(v,\theta)$ in (\ref{eq:moment}) should be actually
understood only as a germ of function, since the exponential map $\exp$
is only defined from a neighborhood of $0$ in the Lie algebra. Therefore
the phase is not defined on whole integration domain $T^{*}\mathcal{G}^{*}$,
but only on a neighborhood of its zero section (which depends on the
germ representative we choose). For (\ref{eq:moment}) to makes sense
as an integral, one needs to throw in the integral a compactly supported
cutoff function $\chi_{(x,\xi)}(\theta,v)$ whose support contains
the critical point $(J(x,\xi),0)$. With this cutoff function, the
integral becomes absolutely convergent, and all the operations permitted
for absolutely convergent integrals will now apply. (This observation
will justify the computations we will perform later on to prove, among
other things, that $J^{a}$ is a unital algebra morphism.) 

The problem at this point is that, if we choose another cutoff function,
the value of the integral will change, since, in fact, we integrate
over a different domain. To remedy this, one should consider the limit
$\hbar\rightarrow0$ after integration, which does not depend on the
choice of the cutoff function, as the stationary phase Theorem guarantees
(see \cite{EZ,M}). We then identify (\ref{eq:moment}) with its asymptotic
expansion in $\hbar$ in the limit $\hbar\rightarrow0$, which is
independent of any cutoff function, leaving the integral always well
defined: This is the meaning of the special integration sign $\int^{\textrm{formal}}$
in (\ref{eq:moment}). 

This remark will apply to all integrals we encounter in this paper.
For the sake of notational brevity, we will avoid to put the cutoff
function each time, and it will be understood that we are dealing
with the asymptotic expansion (a formal power series) of the resulting
absolutely convergent integral. Moreover, again for the sake of notational
simplicity, we will use the standard integral sign $\int$ instead
of the more correct $\int^{\textrm{formal}}$ for most of the integrals
coming in the remaining of this paper. 

Note also that, at times, integral (\ref{eq:moment}) also makes sense
as a non-formal integral (provided one chooses an appropriate space
of functions on $\mathcal{G}^{*}$) as the following example shows
(we will come back to this issue at the end of this paragraph in Remark
(\ref{rem:geom_meaning})):
\end{rem}

\begin{example}
\label{exa:translations}Consider with the action of $\R^{d}$ on
itself by translations and take $a$ to be the trivial $G$-system:
$a=1$. In this case, the quantized action is $T_{v}^{a}\psi(x)=\psi(x-v)$,
its lift to the cotangent bundle is $(\tilde{T}_{v}^{a}f)(x,\xi)=f(x-v,\xi)$,
and the corresponding quantum momentum map is
\[
J^{a}(u)(x,\xi)=\int_{\R^{2d}}u(\bar{\theta})e^{\frac{i}{\hbar}\langle\bar{v},-\xi-\bar{\theta}\rangle}\frac{d\bar{v}d\bar{\theta}}{(2\pi\hbar)^{d}}=u(-\xi).
\]
Here it is easy to see that $J^{a}$ is a unital algebra morphism:
the constant function $1$ on $(\R^{d})^{*}$ is sent to the constant
function $1$ on $T^{*}\R^{d}$; the product of two functions $hk$
on $(\R^{d})^{*}$ (which corresponds to the Gutt star-product when
the Lie group we start with is abelian) is sent to $J^{a}(hk)(x,\xi)=h(-\xi)k(-\xi)$,
which is the standard product of two functions on $T^{*}\R^{d}$ depending
only on $\xi$ (it comes from the asymptotic expansion of $\star_{st}$,
see \cite{EZ,M}). 
\end{example}

The quantizations $J^{a}$ we propose in Theorem \ref{thm:main} do
not satisfy Conditions (1) and (2) of Definition \ref{def: QQM} but
rather deformations of them controlled by $G$-systems. Namely, the
standard product defined by the composition of pseudo-differential
operators as in (\ref{eq:standard_product}) is in general not $G$-equivariant
for cotangent lift actions (unless the action on $\R^{d}$ we start
with is linear). Thus, $(C^{\infty}(T^{*}\R^{d})[[\hbar]],\star_{st})$
is in general not a quantum $G$-space in the sense of Definition
(\ref{def: QQM}), and Condition (1) is not satisfied. However, we
have the following:

\begin{prop}
\label{prop:quantum_GSpace}Let $a$ be a formal $G$-system and let
$\tilde{T}^{a}$ be the induced action on $C^{\infty}(T^{*}\R^{d})[[\hbar]]$
as in (\ref{eq:quant_condition}). The standard product is $G$-equivariant
for this action:
\[
(\tilde{T}_{g}^{a}f)\star_{st}(\tilde{T}_{g}^{a}g)=\tilde{T}_{g}^{a}(f\star_{st}g),
\]
for all $g\in G$ and $f,g\in C^{\infty}(T^{*}\R^{d})[[\hbar]]$. 
\end{prop}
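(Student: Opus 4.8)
The plan is to exploit the fact that $\tilde{T}^a$ is by construction the conjugation action on pseudo-differential operators, so that equivariance of $\star_{st}$ reduces to the purely algebraic statement that conjugation by an invertible operator is an algebra automorphism. Throughout I write the two observables as $f,h\in C^{\infty}(T^{*}\R^{d})[[\hbar]]$ (to avoid the clash with the group element $g$ in the statement), and I work at the level of formal pseudo-differential operators as in Remark \ref{rem:formal_version}, so that all the identities below are understood as identities of asymptotic expansions in $\hbar$.

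First I would record the three structural inputs already at hand. From (\ref{eq:standard_product}), the standard quantization $f\mapsto\Op(f)$ intertwines $\star_{st}$ with operator composition, $\Op(f\star_{st}h)=\Op(f)\circ\Op(h)$, and it is injective (this injectivity is exactly what was used to deduce $g=1$ in the proof of the preceding proposition). From \cite{BI}, the family $T^a$ is a representation of $G$; in particular $T^a_{g^{-1}}T^a_g=\operatorname{id}$, the same relation already invoked above. Finally, the action $\tilde{T}^a$ is defined precisely by the conjugation relation (\ref{eq:quant_condition}), namely $\Op(\tilde{T}^a_g f)=T^a_g\,\Op(f)\,T^a_{g^{-1}}$, valid in the formal setting.

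The heart of the argument is then a one-line manipulation. Applying $\Op$ to the candidate identity and using the inputs above I would compute
\[
\Op\bigl(\tilde{T}^a_g(f\star_{st}h)\bigr)=T^a_g\,\Op(f\star_{st}h)\,T^a_{g^{-1}}=T^a_g\,\Op(f)\,\Op(h)\,T^a_{g^{-1}},
\]
and then insert the identity $T^a_{g^{-1}}T^a_g=\operatorname{id}$ between the two factors:
\[
T^a_g\,\Op(f)\,\bigl(T^a_{g^{-1}}T^a_g\bigr)\,\Op(h)\,T^a_{g^{-1}}=\bigl(T^a_g\Op(f)T^a_{g^{-1}}\bigr)\bigl(T^a_g\Op(h)T^a_{g^{-1}}\bigr)=\Op(\tilde{T}^a_g f)\,\Op(\tilde{T}^a_g h).
\]
A final application of (\ref{eq:standard_product}) rewrites this as $\Op\bigl((\tilde{T}^a_g f)\star_{st}(\tilde{T}^a_g h)\bigr)$, whereupon injectivity of $\Op$ yields the claimed equality $\tilde{T}^a_g(f\star_{st}h)=(\tilde{T}^a_g f)\star_{st}(\tilde{T}^a_g h)$.

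I expect the only point needing care—more a bookkeeping matter than a genuine obstacle—to be the justification that these operator identities are legitimate in the formal category: that conjugation by $T^a_g$ preserves the class of formal pseudo-differential operators, that $\Op$ remains injective there, and that $T^a_{g^{-1}}T^a_g=\operatorname{id}$ holds as an identity of asymptotic expansions. All three are supplied by the formal version of the construction in \cite{BI} together with Remark \ref{rem:formal_version}, so once they are cited the computation is purely algebraic. One could instead try to verify the proposition head-on from the explicit oscillatory-integral formula for $\tilde{T}^a_g$ appearing in the proof of the previous proposition, but that route requires a stationary-phase analysis of a composite phase and is considerably more laborious; the conjugation argument bypasses it entirely.
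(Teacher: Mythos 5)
Your argument is the same as the paper's: both apply $\Op$, use the conjugation relation (\ref{eq:quant_condition}) together with $T^{a}_{g^{-1}}T^{a}_{g}=\operatorname{id}$ and the definition (\ref{eq:standard_product}) of $\star_{st}$, and conclude by injectivity of $\Op$ (you merely run the chain of equalities from the opposite side). The proof is correct and essentially identical to the one in the paper.
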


\begin{proof}
Using the definition of $\tilde{T}^{a}$ in (\ref{eq:quant_condition})
and that of the standard product, we have that
\begin{eqnarray*}
\Op((\tilde{T}_{g}^{a}f)\star_{st}(\tilde{T}_{g}^{a}g)) & = & \Op(\tilde{T}_{g}^{a}f)\circ\Op(\tilde{T}_{g}^{a}f)\\
 & = & \tilde{T_{g}^{a}}\circ\Op(f)\circ\tilde{T}_{g^{-1}}^{a}\circ\tilde{T}_{g}^{a}\circ\Op(g)\circ\tilde{T_{g}^{a}}\\
 & = & T_{g}^{a}\Op(f\star_{st}g)T_{g^{-1}}^{a}\\
 & = & \Op(\tilde{T}_{g}^{a}(f\star_{st}g)),
\end{eqnarray*}
which proves our claim, since $\Op$ is injective. 
\end{proof}

Thus $(C^{\infty}(T^{*}\R^{d})[[\hbar]],\star_{st})$ is a kind of
quantum $G$-space but for the deformed action $\tilde{T}^{a}=\tilde{\varphi}^{*}+\mathcal{O}(\hbar)$
(note that corrections in $\hbar$ are present even in the case $a$
is the trivial $G$-system $a=1$). Condition (2) has also a deformed
analog, which we will be able to prove only later on though:

\begin{thm}
\label{thm:second}For all $v\in\g$, seen as a linear function on
$\mathcal{G}^{*}$, we have that
\begin{equation}
J^{a}(v)=J^{\ast}v+\frac{\hbar}{i}(D_{e}a)v,\quad t_{v}^{a}=Op\big(\frac{i}{\hbar}J^{a}(v)\big),\quad\frac{i}{\hbar}[J^{a}(v),f]=\tilde{t}_{v}^{a}(f),\label{eq:deformation_Cond2}
\end{equation}
where $t_{v}^{a}=(D_{e}T^{a})v$, $\tilde{t}_{v}^{a}=(D_{e}\tilde{T}^{a})v$,
and $D_{e}$ is the derivative w.r.t. the group variable evaluated
at the group unit $e$. 
\end{thm}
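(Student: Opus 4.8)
The plan is to prove the three identities of (\ref{eq:deformation_Cond2}) in the order written: the first is a genuine oscillatory-integral computation, and the remaining two follow formally from it together with the Egorov-type relation (\ref{eq:quant_condition}) and the injectivity of $\Op$. Throughout, the decisive structural input is that $v\in\g$, viewed as the function $u(\theta)=\langle\theta,v\rangle$, is a polynomial of degree one in the momentum variable $\theta$.

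First I would compute $J^{a}(v)$ directly from (\ref{eq:moment})--(\ref{eq:phase}). Substituting $u(\theta)=\langle\theta,v\rangle$ and renaming the group integration variable of (\ref{eq:moment}) to $w$ (to avoid a clash with the fixed element $v$), the key point is that the phase is affine in $\theta$, so the inner $\theta$-integral is an exact Fourier integral,
\[
\int_{\g^{\ast}}\langle\theta,v\rangle\,e^{-\frac{i}{\hbar}\langle\theta,w\rangle}\,\frac{d\theta}{(2\pi\hbar)^{\dim G}}=i\hbar\,\partial_{v}\delta(w),
\]
where $\partial_{v}\delta$ is the directional derivative of the Dirac delta on $\g$ along $v$. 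Integrating the remaining $w$-integral by parts then localizes everything at $w=0$ and gives the closed expression (with $\partial_{v}$ now acting in the variable $w$)
\[
J^{a}(v)(x,\xi)=-i\hbar\,e^{-\frac{i}{\hbar}\langle x,\xi\rangle}\,\partial_{v}\Big[a_{\exp(w)}(x,\xi)\,e^{\frac{i}{\hbar}\langle\xi,\varphi_{\exp(-w)}(x)\rangle}\Big]_{w=0}.
\]
Applying the Leibniz rule and using $\varphi_{\exp(0)}=\mathrm{id}$ (so the surviving exponential cancels the prefactor $e^{-i\langle x,\xi\rangle/\hbar}$), the normalization $a_{e}=1$ (forced by $T_{e}^{a}=\Op(a_{e})=\mathrm{id}$, since $T^{a}$ is a representation), and $\frac{d}{dt}\big|_{t=0}\varphi_{\exp(-tv)}(x)=-X^{v}(x)$, the two remaining terms are exactly $\frac{\hbar}{i}(D_{e}a)v$ and $-\langle\xi,X^{v}(x)\rangle=J^{\ast}v$, which is the first identity.

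I expect the main obstacle to lie precisely in justifying this computation inside the formal-integral framework of Remark \ref{rem:Analytical-meaning}: (\ref{eq:moment}) is defined only through its cutoff-regularized stationary-phase expansion, so one must argue that the distributional reduction above is legitimate and, crucially, that the $\hbar$-expansion terminates at first order. The mechanism is that $\theta$ enters the phase only linearly (through $-\langle\theta,w\rangle$) while the amplitude $\langle\theta,v\rangle$ is polynomial of degree one in $\theta$; integrating $\theta$ out therefore produces exactly $i\hbar\,\partial_{v}\delta(w)$, a single derivative of a delta supported at the critical value $w=0$, and no $\mathcal{O}(\hbar^{2})$ contributions can arise. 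Since the stationary-phase expansion is independent of the cutoff, and since this distributional answer is already a finite-order expression concentrated at the critical point $(J(x,\xi),0)$, it coincides with that expansion; this is the step I would write out most carefully.

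Granting the first identity, the other two are short. For $t_{v}^{a}=\Op(\frac{i}{\hbar}J^{a}(v))$ I would differentiate $T^{a}$ at the unit: writing $T_{\exp(tv)}^{a}\psi(x)=\int\psi(\bar{x})\,a_{\exp(tv)}(x,\bar{\xi})\,e^{\frac{i}{\hbar}\langle\bar{\xi},\varphi_{\exp(-tv)}(x)-\bar{x}\rangle}\frac{d\bar{x}d\bar{\xi}}{(2\pi\hbar)^{d}}$ and taking $\frac{d}{dt}\big|_{t=0}$, the same Leibniz computation gives $t_{v}^{a}=\Op(F)$ with symbol $F(x,\bar\xi)=(D_{e}a)v-\frac{i}{\hbar}\langle\bar\xi,X^{v}(x)\rangle=\frac{i}{\hbar}J^{\ast}v+(D_{e}a)v$, which equals $\frac{i}{\hbar}J^{a}(v)$ by the first identity. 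For the last identity I would differentiate the relation $T_{g}^{a}\Op(f)T_{g^{-1}}^{a}=\Op(\tilde{T}_{g}^{a}f)$ of (\ref{eq:quant_condition}) at $g=e$ along $v$: using $T_{e}^{a}=\mathrm{id}$ and $\frac{d}{dt}\big|_{t=0}T_{\exp(-tv)}^{a}=-t_{v}^{a}$, the left-hand side becomes the commutator $[t_{v}^{a},\Op(f)]$ and the right-hand side becomes $\Op(\tilde{t}_{v}^{a}f)$. Substituting $t_{v}^{a}=\Op(\frac{i}{\hbar}J^{a}(v))$ and using the definition (\ref{eq:standard_product}) of $\star_{st}$ to rewrite the operator commutator as $\Op(\frac{i}{\hbar}[J^{a}(v),f])$, the injectivity of $\Op$ yields $\frac{i}{\hbar}[J^{a}(v),f]=\tilde{t}_{v}^{a}(f)$.
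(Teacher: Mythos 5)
Your proofs of the second and third identities coincide with the paper's: the paper likewise obtains $t_v^a=\Op\big(\tfrac{i}{\hbar}J^a(v)\big)$ by differentiating the kernel of $T^a_{\exp(tv)}$ at $t=0$ and matching the resulting symbol against the first identity, and likewise derives the commutator identity by differentiating (\ref{eq:quant_condition}) at the unit, substituting Proposition \ref{prop:Jv}, and invoking injectivity of $\Op$. Where you genuinely diverge is the first identity. The paper proves it through the Feynman expansion (\ref{eq:exp}): it classifies the graphs in $G_{\geq3}(2)$ whose amplitude can survive when one external vertex carries a function linear in $\theta$ and the other a function of $v$ only, and shows that only the empty graph (giving $J^{*}v$) and the single $v$--$\theta$ propagator joining the two external vertices (giving $\tfrac{\hbar}{i}(D_ea)v$) remain. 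You instead exploit the affineness of the phase in $\theta$ to integrate $\theta$ out exactly, producing $i\hbar\,\partial_v\delta(w)$ and localizing the $w$-integral at the group unit; this is precisely the delta-function mechanism the paper itself uses to compute $J^a(e_v)$ in Lemma \ref{lem:2}, so it is consistent with the paper's formal-integral conventions, and your worry about matching the cutoff-independent stationary-phase expansion is correctly resolved by the observation that the exact answer is a finite $\hbar$-polynomial concentrated at the critical point $(J(x,\xi),0)$. The trade-off: the graph argument reuses machinery set up for the whole paper and extends mechanically to higher-degree polynomial inputs, while your computation is more elementary and self-contained but is special to degree-one symbols.
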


Since $\tilde{t}_{v}^{a}f=\tilde{X}^{v}f+\mathcal{O}(\hbar)$, where
$\tilde{X}^{v}$ is the fundamental vector field for the cotangent
lift action (associated with $v\in\mathcal{G}$), we have that the
last identity in (\ref{eq:deformation_Cond2}) is a deformation of
Ping Xu's second condition in Definition \ref{def: QQM}. 

\begin{rem}
For the trivial $G$-system $a=1$, we have, by (\ref{eq:deformation_Cond2}),
that $J^{a}$ coincides with the classical momentum map on linear
elements and, thus, that $t_{v}^{a}=X^{v}$, which agrees with the
fact that, in this case, the induced action is the pullback action
$T_{g}^{a}\psi(x)=\psi(\varphi_{g}^{-1}(x))$. However, the induced
action $\tilde{T}^{a}$ on $C^{\infty}(T^{*}\R^{d})[[\hbar]]$ defined
by (\ref{eq:quant_condition}) does not coincide in general with the
action by cotangent lift pullbacks, even if $a=1$. Therefore its
derivative $\tilde{t}^{a}$ is also in this case a deformation of
the action by cotangent lifts of fundamental vector fields. 

There is a case though when $\tilde{T}^{a}$ coincides with the action
by cotangent lift pullbacks: namely, when the action $\varphi$ on
$\R^{d}$ we start with is linear or affine as in Example \ref{exa:translations}.
(One sees this directly from (\ref{eq:quant_condition}), since, in
the linear case, there is no corrections in $\hbar$.) This implies
that $\tilde{t}_{v}^{a}=\tilde{X}^{v}$ and that Ping Xu's second
condition is exactly satisfied. Therefore, for linear or affine actions
and the trivial $G$-system, $J^{a}$ is a quantum momentum map in
the sense of \cite{Xu}. Actually, our formula provides an explicit
formula to \cite[Example 6.5]{Xu}, where the quantum momentum map
was computed only on linear elements. 
\end{rem}

Let us close this section by a remark on the geometrical meaning of
the oscillatory integral (\ref{eq:moment}) defining our quantization
family.

\begin{rem}
\label{rem:geom_meaning}\textit{Geometrical meaning of (\ref{eq:cmomap}).}
As shown in \cite{SMIII}, a Poisson map from $B$ to $A$ integrates
to a symplectic micromorphism from $T^{*}A$ to $T^{*}B$, which is
a special lagrangian submanifold germ of $\overline{T^{*}A}\times T^{*}B$.
These symplectic micromorphisms always posses a global generating
function (see \cite{SMII}), which allows us to quantize them (i.e.
to associate with them formal operators from $C^{*}(A)[[\hbar]]$
to $C^{*}(B)[[\hbar]]$, see \cite{SMIV}) using Fourier integral
operator techniques. Formula (\ref{eq:moment}) can be seen as such
a quantization, where the symplectic micromorphism involved is the
one that integrates the classical momentum map $J$, regarded as a
Poisson map.

When the Poisson map is complete (i.e. when it pulls back complete
hamiltonian vector fields to complete hamiltonian vector fields),
the integrating symplectic micromorphism can be extended to a global
lagrangian submanifold, namely a symplectic comorphism (see \cite{comorphisms}).
In this case, one can expect to obtain bounded operators from some
functional space on $A$ to some other functional space on $B$ as
quantization, instead of their formal asymptotic expansions. Here,
this is reflected by the fact that the exponential map is defined
on the whole Lie algebra only for certain types of Lie groups (e.g.
for the nilpotent ones). For those, the phase becomes a true function
defined on the whole integration domain. It would be interesting (although
analytically challenging) to see if our construction can go beyond
the formal and asymptotic case for nilpotent groups. However, it is
not completely clear to us what are the right functional spaces to
be considered as replacements for the formal spaces $C^{\infty}(\mathcal{G}^{*})[[\hbar]]$
and $C^{\infty}(T^{*}\R^{d})[[\hbar]]$. 
\end{rem}

\section{Proofs of the main results}

In this section, we give proofs of Theorem \ref{thm:main} and Theorem
\ref{thm:second}. One of the main tool will be an asymptotic expansion
of (\ref{eq:moment}) in the limit $\hbar\rightarrow0$, using the
standard Feynman graphical methods. We start off by recalling some
basic facts about the Feynman calculus.

\subsection{Feynman asymptotic expansions}

Let us start with a reminder about asymptotic expansions of oscillatory
integrals in terms of Feynman graphs (we refer the reader to \cite{Feynman}
for more details). Consider the integral

\begin{equation}
I(\hbar)=e^{-\frac{i}{\hbar}S(c)}\int_{\R^{d}}^{\textrm{formal}}g_{1}(z)\cdots g_{n}(z)e^{\frac{i}{\hbar}S(z)}\frac{dz}{(2\pi\hbar)^{\frac{d}{2}}},\label{eq:asexp}
\end{equation}
where $S$ is a smooth function on $\R^{d}$ with a unique non-degenerate
critical point $c$, $g_{1},\dots,g_{n}$ are smooth functions on
$\R^{d}$, and (\ref{eq:asexp}) is to be understood as an asymptotic
expansion in the limit $\hbar\rightarrow0$, yielding a formal power
series in $\hbar$, in the sense of Remark \ref{rem:Analytical-meaning}. 

Feynman's Theorem gives the asymptotic expansion (\ref{eq:asexp})
as a sum over certain graphs: namely,

\begin{equation}
I(\hbar)=\frac{e^{\frac{i\pi}{4}\text{sign}\: B}}{\sqrt{\det\vert B\vert}}\Bigg(\sum_{\Gamma\in G_{3\geq}(n)}\frac{(i\hbar)^{|E_{\Gamma}|-|V_{\Gamma}^{{\rm int}}|}}{\vert\operatorname{Aut}(\Gamma)\vert}F_{\Gamma}(S;g_{1},\dots,g_{n})\Bigg),\label{eq:asymfor}
\end{equation}
where $B=D^{2}S(c)$, $\Gamma$ is a Feynman graph with $n$ external
vertices, $|E_{\Gamma}|$ is its number of edges, $|V_{\Gamma}^{{\rm int}}|$
is its number of internal vertices, $F_{\Gamma}$ is the corresponding
Feynman amplitude, and $|\operatorname{Aut}\Gamma|$ is the number
of symmetries of $\Gamma$. Let us now explain what we mean by Feynman's
graphs and amplitudes.

A \textbf{Feynman graph} is a (non-oriented) graph whose vertex set
$V_{\Gamma}$ is partitioned into two disjoint sets $V_{\Gamma}=V_{\Gamma}^{{\rm {ext}}}\sqcup V_{\Gamma}^{{\rm int}}$:
the set of \textbf{external vertices} $V_{\Gamma}^{{\rm ext}}$, which
we will represent on an imaginary line, as in Table \ref{tab:FeynmanGraphs},
and the set of \textbf{internal vertices} $V_{\Gamma}^{{\rm int}}$,
which we will represent above this imaginary line. Multi-edges and
loops are allowed, but each internal vertex must have valence greater
or equal to $3$. We denote by $G_{3\geq}(n)$ the set of isomorphism
classes of Feynman graphs with $n$ external vertices. We now turn
to Feynman's amplitudes.

Let $S$ and $g_{1},\dots,g_{n}$ be smooth functions on $\R^{d}$
as above. Given a Feynman graph $\Gamma\in G_{3\geq}(n)$ with $k$
internal vertices, the corresponding \textbf{Feynman amplitude }$F_{\Gamma}(S;g_{1},\dots,g_{n})$
is a product of $k$ partial derivatives of $S$ (represented by the
internal vertices) and the partial derivatives of the $g_{i}$'s (represented
by the external vertices) all of which are evaluated at the critical
point $c$ and contracted using the tensor $B^{-1}$ (i.e. the inverse
of the Hessian matrix of $S$ evaluated at $c$). The Feynman graph
records which partial derivatives are involved and how contractions
of these partial derivatives are to be done. We can summarize the
procedure as follows:
\begin{enumerate}
\item Label the two extremities of each edge with a coordinate index in
$\{1,\dots,n\}$
\item An internal vertex with $l$ incoming edges labelled with $i_{1},\dots,i_{l}$
will produce a factor $\frac{\partial^{l}S}{\partial z^{i_{1}}\cdots\partial z^{i_{l}}}(c)$
in the amplitude
\item An external vertex $j\in V_{\Gamma}^{{\bf {int}}}$ with $l$ incoming
edges labelled with $i_{1},\dots,i_{l}$ will produce a factor $\frac{\partial^{l}g_{j}}{\partial z^{i_{1}}\cdots\partial z^{i_{l}}}(c)$
in the amplitude
\item Each edge whose extremities are labelled with, say, $i$ and $j$
(such a labelled edge is also called a \textbf{propagator}) will produce
a factor $(B^{-1})^{ij}$ in the amplitude
\end{enumerate}
The resulting terms should be summed up using the Einstein summation
convention. Here is a table representing a few Feynman graphs and
their amplitudes to illustrate the process:

\begin{table}[H]
\begin{centering}
\begin{tabular}{ccccl}
\includegraphics[scale=0.4]{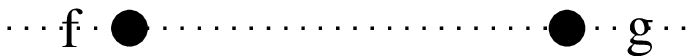} &  & $\rightsquigarrow$ &  & $f(c)g(c)$\tabularnewline
 &  &  &  & \tabularnewline
\includegraphics[scale=0.4]{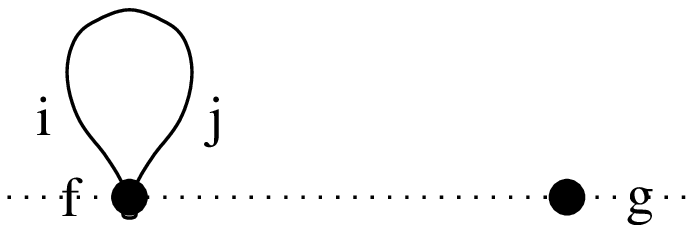} &  & $\rightsquigarrow$ &  & $(B^{-1})^{ij}\frac{\partial^{2}f}{\partial z^{i}\partial z^{j}}(c)g(c)$\tabularnewline
 &  &  &  & \tabularnewline
\includegraphics[scale=0.4]{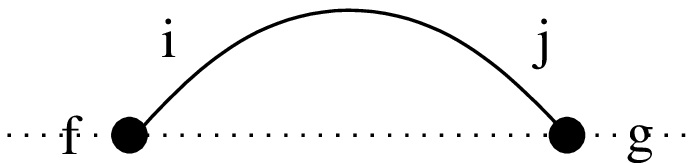} &  & $\rightsquigarrow$ &  & $(B^{-1})^{ij}\frac{\partial f}{\partial z^{i}}(c)\frac{\partial g}{\partial z^{j}}(c)$\tabularnewline
 &  &  &  & \tabularnewline
\includegraphics[scale=0.4]{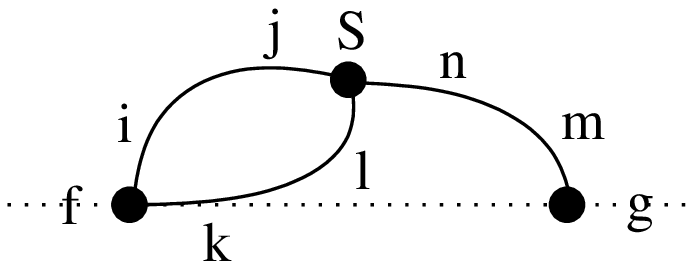} &  & $\rightsquigarrow$ &  & $(B^{-1})^{ij}(B^{-1})^{kl}(B^{-1})^{nm}\frac{\partial^{3}S(c)}{\partial z^{j}\partial z^{l}\partial z^{n}}\frac{\partial f(c)}{\partial z^{i}\partial z^{k}}\frac{\partial g(c)}{\partial z^{m}}$\tabularnewline
 &  &  &  & \tabularnewline
 &  &  &  & \tabularnewline
\end{tabular}
\par\end{centering}

\caption{\label{tab:FeynmanGraphs}On the left some Feynman graphs $\Gamma$
in $G_{3\geq}(2)$ and on the right their corresponding amplitudes
$F_{\Gamma}(S;f,g)$, written using the Einstein summation convention.
We decorated the graphs with labels to make the correspondence more
transparent. }

\end{table}

\subsection{Explicit asymptotic expansion for $J^{a}$}

We now want to use Feynman's theorem (\ref{eq:asymfor}) to obtain
an explicit formula for our family $J^{a}$ of momentum map quantizations
(\ref{eq:moment}) in Theorem \ref{thm:main}. 

The first order of business is to check that the phase in this integral
has a unique non-degenerate critical point and then to compute the
determinant, the signature and the inverse of its Hessian matrix at
this point. Once done, we can use (\ref{eq:asymfor}) in a straightforward
manner. 
\begin{lem}
\label{lem:phase}Consider the phase of integral (\ref{eq:moment}):
\[
S_{x,\xi}(v,\theta)=\langle\xi,\varphi_{\exp(-v)}x\rangle-\langle\theta,v\rangle.
\]
Then, for each $(x,\xi)\in T^{*}\R^{d}$, the phase has a unique critical
point w.r.t. to the $v\theta$-variables; namely,
\[
\Big(v_{0}=0,\quad\theta_{0}=J(x,\xi)\Big),
\]
where $J(x,\xi)=-\langle\xi,X^{v}(x)\rangle$ is the value of the
classical momentum map at $(x,\xi)$. Moreover, at this critical point,
we have that
\[
\det\big(D^{2}S_{x,\xi}(v_{0},\theta_{0})\big)=1\qquad\textrm{and}\qquad\operatorname{sign}(D^{2}S_{x,\xi}(v_{0},\theta_{0}))=0,
\]
and the inverse of the Hessian matrix $B$ is
\[
B^{-1}=\left(\begin{array}{cc}
0 & -\mathbb{I}\\
-\mathbb{I} & -\langle\xi,D^{2}\varphi_{\cdot}(x)\rangle
\end{array}\right).
\]

\end{lem}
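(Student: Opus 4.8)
The plan is to regard $S_{x,\xi}$ as a function of the $2\dim G$ variables $(v,\theta)\in\g\times\g^{\ast}$, with $(x,\xi)$ held fixed, and to package the $\xi$-dependence into the single scalar function $\Phi(v):=\langle\xi,\varphi_{\exp(-v)}x\rangle$, so that $S_{x,\xi}(v,\theta)=\Phi(v)-\langle\theta,v\rangle$. Fixing a basis $(e_{a})$ of $\g$ with linear coordinates $v=(v^{a})$ on $\g$ and dual coordinates $\theta=(\theta_{a})$ on $\g^{\ast}$, I would first differentiate once to locate the critical point, then differentiate again to read off the Hessian in $2\times2$ block form, and finally extract its determinant, signature, and inverse by an elementary congruence.

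For the critical point, only the term $-\langle\theta,v\rangle$ depends on $\theta$, so $\partial_{\theta}S_{x,\xi}=-v$; this vanishes precisely when $v=0$ and already forces $v_{0}=0$. For the remaining variable, $\partial_{v}S_{x,\xi}=D\Phi(v)-\theta$, and the key identification is $D\Phi(0)=J(x,\xi)$. Indeed, differentiating $\varphi_{\exp(-v)}x$ along $e_{a}$ at $v=0$ gives $\frac{d}{dt}\big|_{0}\varphi_{\exp(-te_{a})}x=-X^{e_{a}}(x)$ by the definition of the fundamental vector field, whence $\partial_{v^{a}}\Phi(0)=-\langle\xi,X^{e_{a}}(x)\rangle=\langle J(x,\xi),e_{a}\rangle$ by (\ref{eq:cmomap}). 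Setting $v=0$ in $\partial_{v}S_{x,\xi}=0$ then yields $\theta_{0}=J(x,\xi)$, and both equations together determine $(v_{0},\theta_{0})$ uniquely, as claimed.

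Differentiating a second time, the three blocks are immediate: $\partial_{\theta}\partial_{\theta}S=0$ since $S$ is affine in $\theta$; $\partial_{v}\partial_{\theta}S=\partial_{\theta}\partial_{v}S=-\mathbb{I}$; and $\partial_{v}\partial_{v}S\big|_{(v_{0},\theta_{0})}=D^{2}\Phi(0)$, which is by definition $\langle\xi,D^{2}\varphi_{\cdot}(x)\rangle=:C$, the $\g$-Hessian of $v\mapsto\varphi_{\exp(-v)}x$ at the origin paired with $\xi$ (and symmetric, as mixed partials commute). Ordering the variables as $(v,\theta)$, this gives $B=\left(\begin{smallmatrix}C & -\mathbb{I}\\-\mathbb{I} & 0\end{smallmatrix}\right)$. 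The inverse is then obtained directly by solving $B\,B^{-1}=\mathbb{I}$ block-by-block: the bottom row of $B$ forces the top row of $B^{-1}$ to be $(0,-\mathbb{I})$, and back-substitution yields the bottom row $(-\mathbb{I},-C)$, reproducing the stated $B^{-1}$.

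For the determinant and signature I would diagonalize the associated quadratic form $\langle v,Cv\rangle-2\langle\theta,v\rangle$ by the substitution $w:=\theta-\tfrac{1}{2}Cv$, a linear change of variables of determinant $1$, under which the form becomes $-2\langle w,v\rangle$, a direct sum of $\dim G$ hyperbolic planes $-2w_{a}v^{a}$. Each such plane has one positive and one negative eigenvalue, so $\operatorname{sign}B=0$, and determinant $-1$, whence $\det B=(-1)^{\dim G}$; in particular $\det B\neq0$, so the critical point is non-degenerate, and $|\det B|=1$. This last modulus is exactly what feeds into the stationary-phase prefactor $\sqrt{\det|B|}$ of (\ref{eq:asymfor}), reducing it to $1$. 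This lemma is a direct computation rather than a hard result; the only points needing care are the fundamental-vector-field sign convention, which is what makes $D\Phi(0)$ equal to $J(x,\xi)$ rather than its negative, and the observation that the intrinsic determinant carries a factor $(-1)^{\dim G}$, so that it is really its modulus that enters the asymptotic expansion.
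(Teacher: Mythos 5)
Your proof is correct, and for the one genuinely nontrivial claim --- the signature --- it takes a different route from the paper. The paper argues by continuity: it notes that at $\xi=0$ the Hessian is the off-diagonal block matrix with signature zero, and then rules out a signature jump along the path $t\mapsto D^{2}S_{x,t\bar{\xi}}$ because an eigenvalue crossing zero would force the determinant (identically of modulus $1$) to vanish. You instead diagonalize the quadratic form outright by the unimodular congruence $w=\theta-\tfrac{1}{2}Cv$, reducing it to a direct sum of hyperbolic planes and invoking Sylvester's law of inertia. Your argument is more direct and buys you something the paper's does not: the exact value $\det B=(-1)^{\dim G}$, which correctly refines the lemma's stated $\det B=1$ (the paper's own proof only establishes, and only uses, $|\det B|=1$, which is indeed all that enters the prefactor $\sqrt{\det|B|}$ in the stationary phase expansion). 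The remaining parts --- the critical point computation via $\tfrac{d}{dt}\big|_{0}\varphi_{\exp(-te_{a})}(x)=-X^{e_{a}}(x)$, the block form of $B$, and the verification of $B^{-1}$ --- coincide with the paper's, as they essentially must.
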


\begin{proof}
We get the unique critical point from a direct computation and the
fact that
\[
\frac{d}{dv}\varphi_{\exp(-v)}(x)\vert_{v=0}=-X^{v}(x).
\]
The Hessian matrix of the phase at that point is
\[
B=\left(\begin{array}{cc}
\langle\xi,D^{2}\varphi_{\cdot}(x)\rangle & -\mathbb{I}\\
-\mathbb{I} & 0
\end{array}\right),
\]
from which we get the form of its inverse as well as the fact that
the absolute value of its determinant is always $1$ for all values
$(x,\xi)$. We now prove that the signature of $B$ (i.e. the number
of positive eigenvalues minus the number of negative eigenvalues)
is always equal to zero. First of all, observe that, at $\xi=0$,
the signature of
\[
D^{2}S_{x,\xi}(v_{0},\theta_{0})\vert_{\xi=0}=\left(\begin{array}{cc}
0 & -\mathbb{I}\\
-\mathbb{I} & 0
\end{array}\right),
\]
is zero. Suppose that there exists a $\bar{\xi}$ such that the signature
of $D_{x,\bar{\xi}}^{2}S(v_{0},\theta_{0})\big)$ is non-zero, and
consider the function
\[
f:t\mapsto|\det\big(D_{x,t\bar{\xi}}^{2}S(v_{0},\theta_{0})\big)|,
\]
which is identically equals to $1$ for all values of $t\in[0,1]$.
However, the signature of $D_{x,t\bar{\xi}}^{2}S(v_{0},\theta_{0})$
is zero at $t=0$ and, by assumption, different from zero at $t=1$.
This means that the sign of at least one of the eigenvalues $\lambda_{t}$
of $D_{x,t\bar{\xi}}^{2}S(v_{0},\theta_{0})$ must have changed between
$t=0$ and $t=1$. This implies that there is a value $t_{0}\in[0,1]$
for which $\lambda_{t_{0}}=0$, and, consequently, that $f(t_{0})=0$,
which contradicts the fact that this function is identically $1$.
\end{proof}

The previous Lemma tells us that we can use the Feynman expansion
(\ref{eq:asymfor}) for our quantization family (\ref{eq:moment}),
which yields 
\begin{eqnarray}
(J^{a}u)(x,\xi) & = & \sum_{\Gamma\in G_{3\geq}(2)}\frac{(i\hbar)^{\vert E_{\Gamma}\vert-\vert V_{\Gamma}^{{\rm int}}\vert}}{\vert\operatorname{Aut}\Gamma\vert}F_{\Gamma}(S;u,a)\nonumber \\
 & = & \sum_{\Gamma\in G_{3\geq}(2)}\sum_{l=0}^{\infty}\frac{(-i)^{l}(i\hbar)^{\vert E_{\Gamma}\vert-\vert V_{\Gamma}^{{\rm int}}\vert+l}}{\vert\operatorname{Aut}\Gamma\vert}F_{\Gamma}(S;u,P^{l})\label{eq:exp}
\end{eqnarray}
where $u$ and the $P^{l}$'s are seen as functions of $z=(v,\theta)$
with the particularity that $u$ depends only on $\theta$ and $P^{l}$
depends only on $v$. Because of this particularity and the special
form of $B^{-1}$ many Feynman graphs will have zero amplitude.

We now want to find out how the non-vanishing Feynman graphs look
like. We start with a lemma whose proof can be read off directly from
the form of the phase $S$ and the form of $B^{-1}$ in Lemma \ref{lem:phase}:

\begin{lem}
The only non-vanishing propagators are of two kinds: 

\medskip{}

(1) edges for which one extremity is labelled by $v^{i}$ while the
other extremity is labelled by $\theta_{i}$ ($i=1,\dots,\dim G$).
The corresponding term in the amplitude is $(-1)$. 

\medskip{}

(2) edges for which one extremity is labelled by $\theta_{i}$ while
the other extremity is labelled by $\theta_{j}$ ($i,j=1,\dots,\dim G$).
The corresponding term in the amplitude is $-\sum_{k}\xi_{k}(D_{e}^{2}\varphi^{k})_{ij}$. 

\medskip{}

The only non-vanishing internal vertices of valence $k$ are those
whose incoming edges have labels $v_{i_{1}},\dots,v_{i_{k}}$ (i.e.
none of the labels is taken in the $\theta$-coordinates). 
\end{lem}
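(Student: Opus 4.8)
The plan is to read off both assertions directly from the Feynman rules of the previous subsection, specialized to the phase $S$ and the inverse Hessian $B^{-1}$ computed in Lemma \ref{lem:phase}. Recall that a propagator joining two coordinate labels $a$ and $b$ contributes the factor $(B^{-1})^{ab}$, while an internal vertex of valence $l$ with incident edge-labels $i_{1},\dots,i_{l}$ contributes $\partial^{l}S/\partial z^{i_{1}}\cdots\partial z^{i_{l}}$ evaluated at the critical point $(v_{0},\theta_{0})=(0,J(x,\xi))$. Here the integration coordinate $z=(v,\theta)$ splits into the $\dim G$ variables $v^{i}$ and the $\dim G$ variables $\theta_{i}$, so every edge-label is of $v$-type or of $\theta$-type.

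First, for the propagators, I would simply inspect the three blocks of
\[
B^{-1}=\left(\begin{array}{cc}0 & -\mathbb{I}\\ -\mathbb{I} & -\langle\xi,D^{2}\varphi_{\cdot}(x)\rangle\end{array}\right)
\]
in the $(v,\theta)$ block decomposition. The vanishing upper-left block shows that an edge with both extremities labelled by $v$-indices has propagator $0$. The off-diagonal block $-\mathbb{I}$ shows that an edge joining a $v$-label of index $i$ to a $\theta$-label of index $j$ contributes $-\delta^{i}_{j}$, hence a nonzero factor $-1$ precisely when the two indices agree; this is case (1). The lower-right block gives the factor $-\sum_{k}\xi_{k}(D_{e}^{2}\varphi^{k})_{ij}$ for a $\theta_{i}$--$\theta_{j}$ edge, which is case (2), the evaluation at $v_{0}=0$ being what turns $D^{2}\varphi_{\cdot}(x)$ into $D_{e}^{2}\varphi$.

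Second, for the internal vertices, the key observation is that $S_{x,\xi}(v,\theta)=\langle\xi,\varphi_{\exp(-v)}x\rangle-\langle\theta,v\rangle$ depends on $\theta$ only through the bilinear term $-\langle\theta,v\rangle$, i.e. it is affine-linear in $\theta$. Consequently any partial derivative of order $\geq 3$ containing at least one $\theta$-derivative vanishes: a first $\theta$-derivative already reduces the $\theta$-dependent term to $-v^{i}$ and annihilates the $\varphi$-term, after which a further $\theta$-derivative gives $0$ and any two further $v$-derivatives of the linear expression $-v^{i}$ also give $0$. Since internal vertices are required to have valence $\geq 3$, only those all of whose incident labels are $v$-indices survive, which is exactly the claim.

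There is no serious obstacle here, the statement being a bookkeeping consequence of Lemma \ref{lem:phase}; the one point needing care is the valence-$\geq 3$ hypothesis in the vertex argument. It is essential, because the mixed second derivative $\partial_{v^{j}}\partial_{\theta_{i}}S=-\delta^{i}_{j}$ is itself nonzero --- but a valence-$2$ object is a propagator, not an internal vertex, so this nonzero mixed derivative is already recorded in case (1) and does not conflict with the vanishing claim for vertices. I would make sure the write-up flags this distinction so that the affine-linearity argument is invoked only for valence $\geq 3$.
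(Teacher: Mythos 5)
Your proposal is correct and follows exactly the route the paper intends: the paper itself states only that the lemma ``can be read off directly from the form of the phase $S$ and the form of $B^{-1}$ in Lemma \ref{lem:phase}'', and your block-by-block inspection of $B^{-1}$ together with the observation that $S$ is affine in $\theta$ (so every derivative of order $\geq 3$ involving a $\theta$-index vanishes) is precisely that reading-off, carried out explicitly. Your remark distinguishing the nonzero mixed second derivative (a propagator, encoded in $B$) from the valence-$\geq 3$ internal vertices is a worthwhile clarification that the paper leaves implicit.
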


In Table \ref{tab:nonVanishingPropagators}, we depict the non-vanishing
propagators and vertices entering in the Feynman expansion of $J^{a}$. 

\begin{table}[H]
\begin{centering}
\begin{tabular}{ccccl}
\includegraphics[scale=0.4]{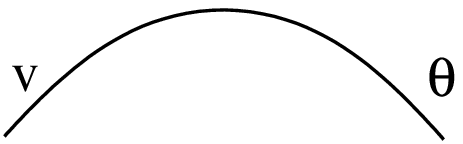} &  & $\rightsquigarrow$ &  & $(B^{-1})^{v\theta}=-1$\tabularnewline
 &  &  &  & \tabularnewline
\includegraphics[scale=0.4]{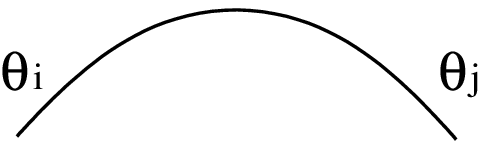} &  & $\rightsquigarrow$ &  & $(B^{-1})^{\theta_{i}\theta_{j}}=-\xi_{k}(D_{e}^{2}\varphi^{k})_{ij}$\tabularnewline
 &  &  &  & \tabularnewline
\includegraphics[scale=0.4]{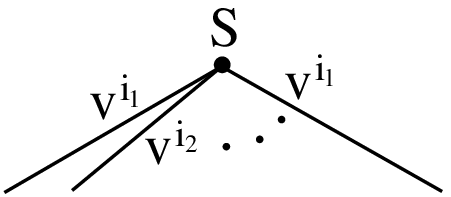} &  & $\rightsquigarrow$ &  & $\frac{\partial^{l}S_{x,\xi}(v_{0},\theta_{0})}{\partial v^{i_{1}}\cdots\partial v^{i_{l}}}=(-1)^{l}\xi_{k}(D_{e}^{l}\varphi^{k})_{i_{1}\dots i_{l}}$\tabularnewline
 &  &  &  & \tabularnewline
\includegraphics[scale=0.4]{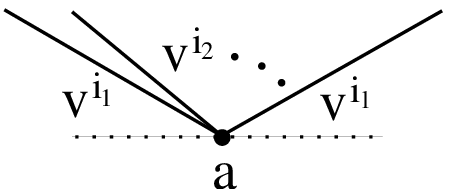} &  & $\rightsquigarrow$ &  & $ $$\frac{}{}$$\frac{\partial^{l}a_{e}(x,\xi)}{\partial v^{i_{1}}\cdots\partial v^{i_{l}}}$\tabularnewline
 &  &  &  & \tabularnewline
\includegraphics[scale=0.4]{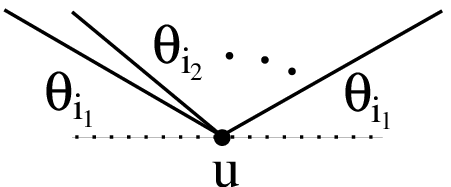} &  & $\rightsquigarrow$ &  & $\frac{\partial^{l}u(J(x,\xi))}{\partial\theta_{i_{1}}\cdots\partial\theta_{i_{l}}}$\tabularnewline
 &  &  &  & \tabularnewline
 &  &  &  & \tabularnewline
\end{tabular}
\par\end{centering}

\caption{\label{tab:nonVanishingPropagators}The non-vanishing propagators
and vertices entering in the Feynman expansion of $J^{a}$.}
\end{table}

\begin{cor}
The only Feynman graphs $\Gamma\in G_{\geq3}(2)$ whose amplitudes
\textup{$F_{\Gamma}(u,a)$ do not vanish for all $u\in C^{\infty}(\g^{\ast})$
and formal }$G$-system $a$ are of the form depicted in Figure \ref{fig:GeneralGraphs}. 
\end{cor}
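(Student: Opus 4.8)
The plan is to turn the statement into a purely combinatorial bookkeeping argument built on the preceding Lemma, which already records exactly which propagators and which internal vertices can contribute. The key organizing idea is to assign to every half-edge (edge extremity) a \emph{type}, either $v$ or $\theta$, dictated by the vertex it sits at: since the Feynman amplitude attaches $\theta$-derivatives of $u$ to the $u$-vertex and $v$-derivatives of $a_e$ to the $a$-vertex, every half-edge at the $u$-vertex is of type $\theta$ and every half-edge at the $a$-vertex is of type $v$; and by the Lemma the only non-vanishing internal vertices are those all of whose half-edges are of type $v$. Thus, in any graph with non-vanishing amplitude, the vertices split into a ``$\theta$-side'' consisting of the single $u$-vertex and a ``$v$-side'' consisting of the $a$-vertex together with all internal vertices.

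First I would read off from the Lemma the only two surviving propagators, the $v$--$\theta$ edge (factor $-1$) and the $\theta$--$\theta$ edge (factor $-\xi_k(D_e^2\varphi^k)_{ij}$), together with the crucial vanishing of the $v$--$v$ propagator, which is forced by the zero upper-left block of $B^{-1}$ in Lemma \ref{lem:phase}. Combining this edge rule with the type assignment immediately rules out every edge joining two $v$-side vertices: there can be no edge between two internal vertices, no edge between an internal vertex and the $a$-vertex, and no self-loop at an internal vertex or at the $a$-vertex, since each such edge would be a vanishing $v$--$v$ propagator. Hence every half-edge on the $v$-side must be the $v$-end of a $v$--$\theta$ edge whose $\theta$-end lands on the unique vertex carrying $\theta$-type half-edges, namely the $u$-vertex; and every $\theta$--$\theta$ edge, having both ends of type $\theta$, must be a self-loop at the $u$-vertex.

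This pins down the shape claimed in Figure \ref{fig:GeneralGraphs}: the $u$-vertex is the hub, the $a$-vertex is joined to it by some number (possibly zero) of $v$--$\theta$ edges, each internal vertex (of valence $\geq 3$) is joined to it by at least three $v$--$\theta$ edges and to nothing else, and the only remaining edges are $\theta$--$\theta$ self-loops at $u$; in particular no purely internal ``vacuum'' component can survive, as it would require $v$--$v$ edges. To finish I would verify the converse, that each graph of this form does contribute for suitable data, by exhibiting a function $u$ and a formal $G$-system $a$ (equivalently an action $\varphi$) for which the relevant contraction of $\partial^l u(J(x,\xi))$, $\partial^l a_e$, and the tensors $\xi_k(D_e^l\varphi^k)$ is nonzero, for instance taking $u$ a monomial of the matching degree in $\theta$. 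I expect this sharpness check, rather than the vanishing argument, to be the only genuinely delicate point; the forward direction that all other graphs vanish is a direct consequence of the type and propagator rules above.
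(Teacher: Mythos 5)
Your argument is correct and is exactly the bookkeeping the paper leaves implicit: the Corollary is stated there without proof, as an immediate consequence of the preceding Lemma, and your half-edge typing (all half-edges at the $u$-vertex of type $\theta$, all half-edges at the $a$-vertex and at internal vertices of type $v$) combined with the vanishing of the $v$--$v$ propagator is precisely the intended deduction of the shape in Figure \ref{fig:GeneralGraphs}. Note only that the statement claims just the forward direction --- any graph not of the depicted form has identically vanishing amplitude --- so the sharpness check you single out as the delicate point is not actually required.
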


\begin{figure}[H]
\begin{centering}
\includegraphics[scale=0.4]{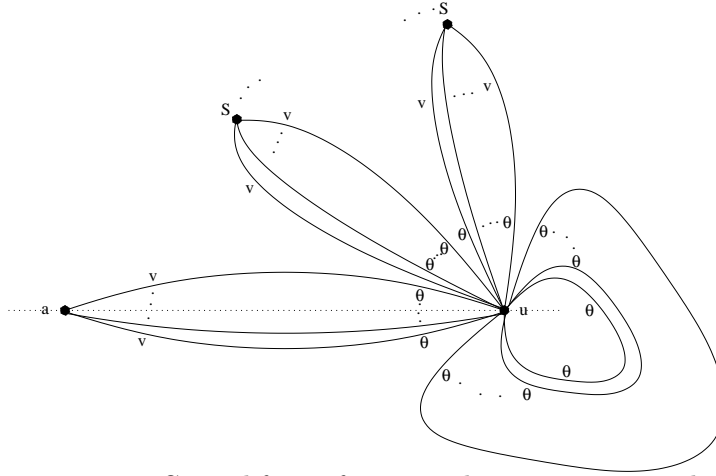}
\par\end{centering}

\caption{\label{fig:GeneralGraphs} General form of non-vanishing Feynman graphs
in the expansion of $J^{a}$.}

\end{figure}

\subsection{Proof of Theorem \ref{thm:main}}

We are now in measure to prove Theorem \ref{thm:main}. We split its
proof into a series of lemmas. The first one shows that $J^{a}$ has
the right first term to be candidate for a deformation quantization
of the classical momentum map:
\begin{lem}
The first term of the expansion (\ref{eq:exp}) is 
\begin{equation}
J^{a}u=J^{\ast}u+\mathcal{O}(\hbar),
\end{equation}
where $J^{*}$ is the pullback of the classical momentum map (\ref{eq:cmomap}). 
\end{lem}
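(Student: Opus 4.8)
The plan is to recognize the integral (\ref{eq:moment}) as a single instance of the Feynman oscillatory integral (\ref{eq:asexp}) and then simply read off its $\hbar^{0}$ term. First I would check that the prefactor $e^{-i\langle x,\xi\rangle/\hbar}$ appearing in (\ref{eq:moment}) is exactly the normalizing factor $e^{-\frac{i}{\hbar}S(c)}$ of (\ref{eq:asexp}): by Lemma \ref{lem:phase} the critical point is $c=(v_{0},\theta_{0})=(0,J(x,\xi))$, and evaluating the phase there gives
\[
S_{x,\xi}(0,J(x,\xi))=\langle\xi,\varphi_{\exp(0)}(x)\rangle-\langle J(x,\xi),0\rangle=\langle\xi,x\rangle,
\]
since $\varphi_{e}=\operatorname{id}$. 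Hence (\ref{eq:moment}) is precisely of the form (\ref{eq:asexp}) with the two factors $u(\theta)$ and $a_{\exp(v)}(x,\xi)$ playing the role of the external functions, and its asymptotic expansion is the series (\ref{eq:exp}). Moreover, again by Lemma \ref{lem:phase}, $\operatorname{sign}B=0$ and $\det|B|=1$, so the overall Feynman prefactor $e^{\frac{i\pi}{4}\operatorname{sign}B}/\sqrt{\det|B|}$ in (\ref{eq:asymfor}) equals $1$ and can be dropped.

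Next I would isolate the term of order $\hbar^{0}$ in (\ref{eq:exp}). A term indexed by a graph $\Gamma$ and an integer $l\geq0$ carries the power $\hbar^{|E_{\Gamma}|-|V_{\Gamma}^{\mathrm{int}}|+l}$. Since every internal vertex has valence at least $3$, the handshake identity $2|E_{\Gamma}|=\sum_{v}\deg(v)\geq3|V_{\Gamma}^{\mathrm{int}}|$ yields $|E_{\Gamma}|\geq\tfrac{3}{2}|V_{\Gamma}^{\mathrm{int}}|\geq|V_{\Gamma}^{\mathrm{int}}|$, so the exponent is always nonnegative and vanishes only when $l=0$ and $|V_{\Gamma}^{\mathrm{int}}|=|E_{\Gamma}|=0$. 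Thus the only contribution at order $\hbar^{0}$ comes from the empty graph (no edges, no internal vertices) with $l=0$; its automorphism group is trivial, and by the rules of Table \ref{tab:FeynmanGraphs} its amplitude is just the product of the two external-vertex functions evaluated at $c$, namely
\[
u(\theta_{0})\,a_{\exp(v_{0})}(x,\xi)=u(J(x,\xi))\,a_{e}(x,\xi).
\]

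Finally I would identify the leading coefficient of $a_{e}$ as $1$, so that the $\hbar^{0}$ term becomes $u(J(x,\xi))=(J^{\ast}u)(x,\xi)$, which is the claim. This normalization is the only genuine input: since $T^{a}$ is a representation, $T_{e}^{a}=\operatorname{id}$; but $T_{e}^{a}=\Op(a_{e})$ because $\varphi_{e}=\operatorname{id}$, and injectivity of $\Op$ forces $a_{e}=1$ (equivalently, setting $g=e$ in (\ref{eq:aa1}) gives $a_{e}^{2}=1$, together with the leading normalization $P_{e}^{0}=1$ read off from (\ref{eq:formal_rep_on_conf})); for the present statement only the leading coefficient $P_{e}^{0}=1$ is actually needed. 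The main points to be careful about are therefore twofold: matching the prefactor of (\ref{eq:moment}) with $e^{-\frac{i}{\hbar}S(c)}$ so that the expansion really is (\ref{eq:exp}), and the combinatorial bound that rules out every graph except the empty one at order $\hbar^{0}$. Both are immediate once Lemma \ref{lem:phase} and the representation property of $T^{a}$ are in hand.
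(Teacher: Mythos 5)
Your proof is correct and follows essentially the same route as the paper: both identify the order-$\hbar^{0}$ term of the Feynman expansion with the empty graph, whose amplitude is $u(\theta_{0})a_{\exp(v_{0})}(x,\xi)=u(J(x,\xi))a_{e}(x,\xi)$ by Lemma \ref{lem:phase}. You additionally spell out three points the paper leaves implicit --- the match of the prefactor $e^{-\frac{i}{\hbar}\langle x,\xi\rangle}$ with $e^{-\frac{i}{\hbar}S(c)}$, the valence-$\geq 3$ counting that excludes all other graphs at order zero, and the derivation of $a_{e}=1$ from $T_{e}^{a}=\operatorname{id}$ --- all of which are accurate and consistent with the paper's normalization $P_{e}^{0}=1$.
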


\begin{proof}
The Feynman graph with two external vertices, no internal vertex,
and no edge is the first term of the expansion. The amplitude corresponding
to this term is
\[
F_{\Gamma}(a,u)=u(\theta_{0})a_{\exp(v_{0})}(x,\xi)=(J^{\ast}u)(x,\xi),
\]
since, by Lemma \ref{lem:phase}, the unique critical point is given
by $\theta_{0}=J(x,\xi)$ and $v_{0}=0$, and the formal $G$-system
evaluated at the unit is identically $1$ (i.e. $P_{e}^{0}(x)=1$
and $P_{e}^{n}(x,\xi)=0$ for $n\geq1$). 
\end{proof}

Let us show now the unitality part of Theorem \ref{thm:main}: 

\begin{lem}
We have that $J^{a}(1)=1.$
\end{lem}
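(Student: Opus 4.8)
The plan is to substitute $u=1$ directly into the defining integral (\ref{eq:moment}) and exploit that, once $u$ is constant, the integrand depends on the covariable $\theta$ only through the linear term $-\langle\theta,v\rangle$ sitting in the phase $S_{(x,\xi)}$. Concretely,
\[
J^{a}(1)(x,\xi)=e^{-\frac{i\langle x,\xi\rangle}{\hbar}}\int_{\g\times\g^{\ast}} a_{\exp(v)}(x,\xi)\,e^{\frac{i}{\hbar}\langle\xi,\varphi_{\exp(-v)}(x)\rangle}\,e^{-\frac{i}{\hbar}\langle\theta,v\rangle}\,\frac{dv\,d\theta}{(2\pi\hbar)^{\dim G}}.
\]
First I would carry out the $\theta$-integration. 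Since $\theta$ now appears only in the exponent $e^{-\frac{i}{\hbar}\langle\theta,v\rangle}$, integrating over $\theta\in\g^{\ast}$ produces a Dirac delta in the dual variable, $\int e^{-\frac{i}{\hbar}\langle\theta,v\rangle}\,\frac{d\theta}{(2\pi\hbar)^{\dim G}}=\delta(v)$, by Fourier inversion.

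The delta then collapses the remaining $v$-integral to its value at $v=0$, which is exactly the unique critical point found in Lemma \ref{lem:phase}. Using that the group unit acts trivially, $\varphi_{\exp(0)}=\varphi_{e}=\mathrm{id}_{\R^{d}}$, and that a formal $G$-system evaluated at the unit is identically $1$ (i.e. $a_{e}\equiv 1$), the surviving factor is
\[
J^{a}(1)(x,\xi)=e^{-\frac{i\langle x,\xi\rangle}{\hbar}}\,a_{e}(x,\xi)\,e^{\frac{i}{\hbar}\langle\xi,x\rangle}=e^{-\frac{i\langle x,\xi\rangle}{\hbar}}\,e^{\frac{i}{\hbar}\langle x,\xi\rangle}=1,
\]
so the two prefactor exponentials cancel and $J^{a}(1)=1$ with no $\hbar$-corrections whatsoever.

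The only point requiring care is reconciling this exact delta-function manipulation with the formal/asymptotic meaning of (\ref{eq:moment}) explained in Remark \ref{rem:Analytical-meaning}; because the $\theta$-dependence of the phase is purely linear, this is harmless. If one prefers to stay entirely within the graphical framework, one can argue equivalently from the Feynman expansion (\ref{eq:exp}): setting $u=1$ kills every external $u$-vertex of positive valence, since all positive-order derivatives of a constant vanish, so the $u$-vertex can only occur isolated, contributing $u(J(x,\xi))=1$. As the only source of $\theta$-labels in a non-vanishing graph is the $u$-vertex (in Table \ref{tab:nonVanishingPropagators} the $a$-vertex and the internal vertices carry only $v$-labels), an isolated $u$-vertex leaves no $\theta$-ends, hence no admissible propagators; the valence-$\geq 3$ condition then forbids any internal vertex and forces the $a$-vertex to be isolated as well, contributing $a_{e}(x,\xi)=1$. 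Only the edgeless graph survives, reproducing $J^{\ast}1=1$. I thus expect the main obstacle to be mere bookkeeping rather than analysis, precisely because the choice $u=1$ trivializes the covariable integration.
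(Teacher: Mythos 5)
Your proposal is correct, and it actually contains the paper's proof as its second half: the paper's argument is precisely the graphical one you sketch, namely that in the Feynman expansion (\ref{eq:exp}) every non-vanishing graph other than the edgeless one (cf.\ Figure \ref{fig:GeneralGraphs}) involves a derivative of $u$, so setting $u=1$ leaves only the edgeless graph with amplitude $u(\theta_{0})a_{e}(x,\xi)=1$. Your primary argument --- integrating out $\theta$ to produce $\delta(v)$, collapsing the $v$-integral at $v=0$, and using $a_{e}\equiv1$ together with $\varphi_{e}=\mathrm{id}$ so that the two exponential prefactors cancel --- is a genuinely different and more direct route. It is legitimate here because the phase is linear in $\theta$, and it is in fact the very same manipulation the paper performs later in the proof of Lemma \ref{lem:2} when computing $J^{a}(e_{v})$, so it is fully consistent with the formal/asymptotic reading of (\ref{eq:moment}) in Remark \ref{rem:Analytical-meaning}. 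What the direct computation buys is an exact statement ($J^{a}(1)=1$ with no $\hbar$-corrections, visibly) without invoking the classification of non-vanishing propagators; what the graphical argument buys is uniformity with the rest of the paper's bookkeeping, since the same vertex/propagator analysis is reused for the computation of $J^{a}(v)$ on linear elements. Either argument alone suffices.
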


\begin{proof}
Using the Feynman expansion (\ref{eq:exp}), we have that the only
graph $\Gamma\in G_{\geq3}(2)$ such that $F_{\Gamma}(a,1)\neq0$
is the one with no edge (any other non-vanishing graph as in Figure
\ref{fig:GeneralGraphs} involves derivatives of the constant function
$1$).
\end{proof}

To complete the proof of Theorem \ref{thm:main}, that is, to show
that $J^{a}$ is an algebra morphism, we need to wait until the next
section. However, we are ready for the proof of Theorem \ref{thm:second}. 

In this section, we will prove that the map
\[
J^{a}:\big(C^{\infty}(\g^{\ast})[[\hbar]],\star_{G}\big)\longrightarrow\big(C^{\infty}(T^{\ast}\bR^{d})[[\hbar]],\ast_{st}\big)
\]
defined in Theorem \ref{thm:main} is an algebra homomorphism, i.e.,
it is a quantization of the classical momentum map of a cotangent
lift action $J$ in (\ref{eq:cmomap}) (we already know from last
section that $J^{a}(1)=1$ and that $J^{a}u=J^{*}u+\mathcal{O}(\hbar)$).
This will complete the proof of Theorem \ref{thm:main}.

At last, we prove that $J^{a}$ is an algebra morphism. Most of the
following computations are formal but can be made rigorous by throwing
in suitable compactly supported cutoff functions in the integrals
as explained in Remark \ref{rem:Analytical-meaning}. For convenience,
we define for $v\in\g$ the following function on $\mathcal{G}^{*}$:
\begin{equation}
\begin{CD}e_{v}(\theta)=e^{\frac{i}{\hbar}\langle v,\theta\rangle},\end{CD}
\end{equation}
whose asymptotic Fourier transform is the translated delta function,
\[
\mathcal{F}_{\hbar}(e_{v})(w)=\delta(w-v),
\]
where the asymptotic Fourier transform of a distribution on $\R^{n}$
is defined by
\[
\mathcal{F}_{\hbar}f(\xi)=\int f(x)e^{-\frac{i}{\hbar}\langle\xi,x\rangle}\frac{dx}{(2\pi\hbar)^{\frac{n}{2}}}.
\]
The following lemma is a standard property of the Gutt star-product,
which we reprove here:

\begin{lem}
\label{lem:1} For all $v,w\in\g$, we have 
\begin{equation}
e_{v}\ast_{G}e_{w}=e_{\text{BCH}(v,uw)}
\end{equation}
\end{lem}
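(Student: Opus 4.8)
The plan is to compute $e_v \star_G e_w$ directly from the oscillatory-integral definition of the Gutt star-product, exploiting the fact that $e_v$ and $e_w$ are linear exponentials whose asymptotic Fourier transforms are delta functions. Recall the Gutt product is the asymptotic expansion of
\[
(f\star_G g)(\theta)=\int f(\theta_1)g(\theta_2)e^{\frac{i}{\hbar}\big(\langle\theta,\mathrm{BCH}(v_1,v_2)\rangle-\langle\theta_1,v_1\rangle-\langle\theta_2,v_2\rangle\big)}\frac{dv_1\,d\theta_1\,dv_2\,d\theta_2}{(2\pi\hbar)^{\dim\g}}.
\]
Substituting $f=e_v$ and $g=e_w$, the $\theta_1$- and $\theta_2$-integrations become, respectively,
\[
\int e^{\frac{i}{\hbar}\langle\theta_1,\,v-v_1\rangle}\frac{d\theta_1}{(2\pi\hbar)^{\dim\g/2}}=\delta(v_1-v),\qquad \int e^{\frac{i}{\hbar}\langle\theta_2,\,w-v_2\rangle}\frac{d\theta_2}{(2\pi\hbar)^{\dim\g/2}}=\delta(v_2-w),
\]
using exactly the asymptotic-Fourier-transform identity $\mathcal{F}_\hbar(e_v)(w)=\delta(w-v)$ recalled just above the lemma.

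\emph{Next I would} integrate out $v_1$ and $v_2$ against these two delta functions, which forces $v_1=v$ and $v_2=w$ and collapses the entire integral to a single evaluation:
\[
(e_v\star_G e_w)(\theta)=e^{\frac{i}{\hbar}\langle\theta,\,\mathrm{BCH}(v,w)\rangle}=e_{\mathrm{BCH}(v,w)}(\theta).
\]
This is precisely the claimed identity, so the typographical ``$uw$'' in the statement should read ``$w$''. The computation is essentially just two Fourier-inversion steps followed by reading off the remaining phase, so most of it is routine once the delta-function substitutions are in place.

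\emph{The one point that requires care} — and the step I expect to be the main obstacle — is justifying these manipulations within the formal/asymptotic framework, since the delta-function identities are strictly distributional statements while $\star_G$ is defined as a formal power series in $\hbar$ via stationary phase. As stressed in Remark \ref{rem:Analytical-meaning}, the honest object is the asymptotic expansion of an absolutely convergent integral obtained by inserting a compactly supported cutoff; the delta-function shorthand encodes the stationary-phase localization at the critical point $v_1=v$, $v_2=w$. Because the phase is affine in $\theta_1,\theta_2$ and the resulting critical point is non-degenerate with the $\theta_i$-Hessian contributing trivially, the stationary-phase expansion terminates immediately and reproduces the naive delta-function answer with no higher-order corrections in $\hbar$. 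I would therefore either invoke this localization directly or note that the integral representation of $\star_G$ specialized to exponentials reduces, after the cutoff-regularized convergent integration, to the stated closed form, so that no genuine power series in $\hbar$ survives beyond the leading phase.
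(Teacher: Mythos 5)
Your proposal is correct and follows essentially the same route as the paper: the paper likewise rewrites the integrand using $\mathcal{F}_{\hbar}(e_{v_{0}})=\delta(\cdot-v_{0})$ (which is exactly what your $\theta_{1},\theta_{2}$-integrations produce) and collapses the remaining $v_{1},v_{2}$-integral to the single phase $e^{\frac{i}{\hbar}\langle\theta,\text{BCH}(v,w)\rangle}$. Your identification of ``$uw$'' as a typo for ``$w$'' is also right, and the extra care you take about the formal/stationary-phase justification is consistent with, though more explicit than, what the paper does.
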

 
\begin{proof}
Let $v_{0},w_{0}\in\g$. Then we have

\begin{eqnarray*}
(e_{v_{0}}\ast_{G}e_{u_{0}})(\theta) & = & \int_{\g\times\g^{*}}\mathcal{F}_{\hbar}(e_{v_{0}})(v)\mathcal{F}_{\hbar}(e_{w_{0}})(w)e^{\frac{i}{\hbar}\langle\theta,BCH(v,w)\rangle}\frac{dvdw}{(2\pi\hbar)^{\dim G}}\\
 & = & \int_{\g\times\g^{*}}\delta(v-v_{0})\delta(u-w_{0})e^{\frac{i}{\hbar}\langle\theta,BCH(v,w)\rangle}\frac{dvdw}{(2\pi\hbar)^{\dim G}}\\
 & = & e^{\frac{i}{\hbar}\langle\theta,BCH(v_{0},w_{0})\rangle}=e_{BCH(v_{0},w_{0})}(\theta).
\end{eqnarray*}
\end{proof}

\begin{lem}
\label{lem:2} For any $v\in\g$, seen as a linear function on $\mathcal{G}^{*}$,
the following identity holds true: 
\begin{equation}
Op\big(J^{a}(e_{v})\big)=T_{\exp(v)}^{a}.
\end{equation}
 \end{lem}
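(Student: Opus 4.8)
The plan is to evaluate $J^{a}(e_{v})$ in closed form and then to recognize $\Op$ of the resulting symbol as the Fourier integral operator $T_{\exp(v)}^{a}$ defining the quantized action. First I would substitute $u=e_{v}$ into the defining integral (\ref{eq:moment}), writing $w$ for the Lie-algebra integration variable so as to free the symbol $v$. Since $e_{v}(\theta)=e^{\frac{i}{\hbar}\langle v,\theta\rangle}$ and the phase (\ref{eq:phase}) contributes the term $-\langle\theta,w\rangle$, the entire $\theta$-dependence of the integrand lies in the exponential and is \emph{linear}, namely $e^{\frac{i}{\hbar}\langle\theta,\,v-w\rangle}$. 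Consequently the $\theta$-integration is not a stationary-phase expansion but an exact (asymptotic) Fourier transform of $e_{v}$: against the normalized measure it produces the delta function $\delta(v-w)$, which is precisely the mechanism recorded by the identity $\mathcal{F}_{\hbar}(e_{v})(w)=\delta(w-v)$ used in Lemma \ref{lem:1}. This is the one point requiring care, since the integrand depends on $\hbar$; the collapse must be read in the formal sense of Remark \ref{rem:Analytical-meaning}, performing the linear $\theta$-integral and the subsequent $w$-integral exactly rather than through the Feynman calculus.

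Next I would carry out the remaining $w$-integral against $\delta(v-w)$, which simply sets $w=v$ and, keeping the prefactor $e^{-\frac{i}{\hbar}\langle x,\xi\rangle}$, yields the closed form
\[
J^{a}(e_{v})(x,\xi)=a_{\exp(v)}(x,\xi)\,e^{\frac{i}{\hbar}\big(\langle\xi,\varphi_{\exp(-v)}(x)\rangle-\langle x,\xi\rangle\big)}.
\]
Thus $J^{a}(e_{v})$ is, up to the oscillatory phase factor, exactly the amplitude $a_{\exp(v)}$ that enters the definition of $T^{a}$.

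Finally I would apply the standard quantization (\ref{eq:pseudo_diffs}) to this symbol and match phases. Writing $\Op(J^{a}(e_{v}))\psi(x)$ as an integral over $(\bar{x},\bar{\xi})$, the symbol contributes $\frac{i}{\hbar}\big(\langle\bar{\xi},\varphi_{\exp(-v)}(x)\rangle-\langle x,\bar{\xi}\rangle\big)$ while $\Op$ contributes $\frac{i}{\hbar}\langle\bar{\xi},x-\bar{x}\rangle$; the two copies of $\langle x,\bar{\xi}\rangle$ cancel, leaving exactly $\frac{i}{\hbar}\langle\bar{\xi},\varphi_{\exp(-v)}(x)-\bar{x}\rangle$, so that
\[
\Op(J^{a}(e_{v}))\psi(x)=\int\psi(\bar{x})\,a_{\exp(v)}(x,\bar{\xi})\,e^{\frac{i}{\hbar}\langle\bar{\xi},\varphi_{\exp(-v)}(x)-\bar{x}\rangle}\frac{d\bar{x}\,d\bar{\xi}}{(2\pi\hbar)^{d}}=T_{\exp(v)}^{a}\psi(x),
\]
since $(\exp v)^{-1}=\exp(-v)$ reproduces precisely the FIO phase defining $T_{g}^{a}$. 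The only genuine obstacle is bookkeeping the prefactor $e^{-\frac{i}{\hbar}\langle x,\xi\rangle}$ together with the formal meaning of the collapsed integral; once the symbol of $J^{a}(e_{v})$ is in hand, identifying $\Op$ of it with $T_{\exp(v)}^{a}$ is a direct comparison of amplitudes and phases.
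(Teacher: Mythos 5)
Your proposal is correct and follows essentially the same route as the paper: collapse the $\theta$-integral to $\delta(w-v)$ (the asymptotic Fourier transform of $e_{v}$), evaluate the $w$-integral to get the closed-form symbol $e^{-\frac{i}{\hbar}\langle\xi,x\rangle}a_{\exp(v)}(x,\xi)e^{\frac{i}{\hbar}\langle\xi,\varphi_{\exp(-v)}(x)\rangle}$, and then observe that applying $\Op$ cancels the $\langle x,\bar{\xi}\rangle$ terms and reproduces exactly the FIO phase of $T_{\exp(v)}^{a}$. The paper's proof is the same computation, merely writing the delta function in from the start.
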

 
\begin{proof}
First we note that
\begin{eqnarray*}
(J^{a}e_{v})(\xi,x) & = & e^{-\frac{i}{\hbar}\langle\xi,x\rangle}\int_{\g}\delta(w-v)a_{\exp(w)}(x,\xi)e^{\frac{i}{\hbar}\langle\xi,\varphi_{\exp(-w)}(x)\rangle}\frac{dw}{(2\pi\hbar)^{\frac{\dim G}{2}}}\\
 & = & e^{-\frac{i}{\hbar}\langle\xi,x\rangle}a_{\exp(v)}(x,\xi)e^{\frac{i}{\hbar}\langle\xi,\varphi_{\exp(-v)}(x)\rangle},
\end{eqnarray*}
which implies that
\begin{eqnarray}
Op\big(J^{a}(e_{v})\big)\psi(x) & = & \int_{\bR^{2d}}\psi(\overline{x})J^{a}(e_{v})(x,\overline{\xi})e^{\frac{i}{\hbar}\langle\overline{\xi},x-\overline{x}\rangle}\frac{d\overline{x}d\overline{\xi}}{(2\pi\hbar)^{d})}\nonumber \\
 & = & \int_{\bR^{2d}}\psi(\overline{x})a_{\exp(v)}(x,\overline{\xi})e^{\frac{i}{\hbar}\langle\overline{\xi},\varphi_{\exp(-v)}(x)-\overline{x}\rangle}\frac{d\overline{x}d\overline{\xi}}{(2\pi\hbar)^{d}}.
\end{eqnarray}
which proves the lemma.
\end{proof}

\begin{lem}
\label{lem:3} For every $v,w\in\g$ we have
\begin{equation}
J^{a}\big(e_{v}\star_{G}e_{w}\big)=J^{a}(e_{v})\star_{st}J^{a}(e_{w}).
\end{equation}
 \end{lem}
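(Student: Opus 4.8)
**

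The strategy is to reduce the multiplicativity of $J^a$ on the exponential functions $e_v$ to the already-established fact that $T^a$ is a representation, using the bridge provided by Lemma \ref{lem:2}. The plan is to combine Lemma \ref{lem:1}, Lemma \ref{lem:2}, and the injectivity of $\Op$ so that no direct computation with the oscillatory integral defining $J^a$ is needed at all.

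First I would apply $\Op$ to the left-hand side. By Lemma \ref{lem:1} we have $e_v \star_G e_w = e_{\text{BCH}(v,w)}$, so
\[
\Op\big(J^a(e_v \star_G e_w)\big) = \Op\big(J^a(e_{\text{BCH}(v,w)})\big) = T^a_{\exp(\text{BCH}(v,w))},
\]
where the last equality is Lemma \ref{lem:2}. Now the key point is the group-homomorphism property of $\exp$ and $\text{BCH}$: since $\exp(\text{BCH}(v,w)) = \exp(v)\exp(w)$ in $G$, and since $T^a$ is a representation of $G$ (as recalled from \cite{BI} in the proof of the Proposition following the definition of $G$-system), we get
\[
T^a_{\exp(\text{BCH}(v,w))} = T^a_{\exp(v)\exp(w)} = T^a_{\exp(v)}\, T^a_{\exp(w)}.
\]
Applying Lemma \ref{lem:2} to each factor, this equals $\Op\big(J^a(e_v)\big)\,\Op\big(J^a(e_w)\big)$.

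Finally I would invoke the defining property of the standard product (\ref{eq:standard_product}), namely $\Op(f \star_{st} g) = \Op(f)\circ\Op(g)$, to rewrite the composition of operators as
\[
\Op\big(J^a(e_v)\big)\circ\Op\big(J^a(e_w)\big) = \Op\big(J^a(e_v)\star_{st}J^a(e_w)\big).
\]
Chaining these equalities gives $\Op\big(J^a(e_v\star_G e_w)\big) = \Op\big(J^a(e_v)\star_{st}J^a(e_w)\big)$, and the claim follows by injectivity of $\Op$.

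The only subtle point — and the step I would flag as the main obstacle — is the validity of $\exp(\text{BCH}(v,w)) = \exp(v)\exp(w)$ together with the representation property of $T^a$ in the \emph{formal} and germ-level setting described in Remark \ref{rem:Analytical-meaning}. The BCH series converges only near the origin, so these identities must be read as germ identities (or equalities of asymptotic expansions), which is exactly the regime in which Lemma \ref{lem:2} and the composition law $T^a_g T^a_h = T^a_{gh}$ from \cite{BI} were established. Once one accepts that all operators here are the formal operators of Remark \ref{rem:formal_version}, every equality above is an equality of formal power series in $\hbar$, and the argument is purely algebraic.
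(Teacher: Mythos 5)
Your proposal is correct and follows exactly the paper's own argument: reduce both sides to pseudo-differential operators via Lemma \ref{lem:2}, use Lemma \ref{lem:1} together with the representation property $T^{a}_{\exp(v)}T^{a}_{\exp(w)}=T^{a}_{\exp(v)\exp(w)}$ from \cite{BI}, and conclude by injectivity of $\Op$. Your closing remark about reading $\exp(\mathrm{BCH}(v,w))=\exp(v)\exp(w)$ at the germ/formal level is a reasonable caveat that the paper leaves implicit.
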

 
\begin{proof}
We check this identity at the level of the corresponding pseudo-differential
operators. Then, from Lemma \ref{lem:1} and \ref{lem:2}, we obtain
\[
\Op\big(J^{a}(e_{v}\star_{G}e_{w})\big)=\Op\big(J^{a}(e_{\text{BCH}(v,w)})\big)=T_{\exp\big(\text{BCH}(v,w)\big)}^{a}.
\]
On the other hand
\[
\Op\big(J^{a}(e_{v})\star_{st}J^{a}(e_{w})\big)=\Op\big(J^{a}(e_{v})\big)\circ\Op\big(J^{a}(e_{w})\big)=T_{\exp(v)}^{a}\circ T_{\exp(w)}^{a},
\]
and we can conclude the proof by invoking the injectivity of $\Op$
and the fact that
\[
T_{\exp\big(\text{BCH}(v,w)\big)}^{a}=T_{\exp(v)}^{a}\circ T_{\exp(w)}^{a},
\]
since the operators $\big\{ T_{g}^{a}\big\}_{g\in G}$ define a representation
of the Lie group $G$.
\end{proof}

Then we can conclude the proof of Theorem \ref{thm:main}:
\begin{prop}
The map $J^{a}$ is a an algebra morphism. 
\end{prop}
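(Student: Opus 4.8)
The plan is to bootstrap the full algebra morphism property from the special case of exponentials already settled in Lemma \ref{lem:3}. The device is the asymptotic Fourier transform $\mathcal F_\hbar$ used above: every $u\in C^\infty(\g^\ast)[[\hbar]]$ can be written as a superposition of the functions $e_v$,
\[
u(\theta)=\int_{\g}\mathcal F_\hbar(u)(v)\,e_v(\theta)\,\frac{dv}{(2\pi\hbar)^{\dim G/2}},
\]
so that the $e_v$ play the role of a spanning family and it is enough to know $J^a$ on them.

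Concretely, I would expand both $f$ and $g$ in this way and push the three operations through the integral sign. Since $J^a$ is a (formal) differential operator, hence linear, and since $\star_G$ and $\star_{st}$ are bilinear, both $J^a(f\star_G g)$ and $J^a(f)\star_{st}J^a(g)$ become the same double integral over $\g\times\g$, weighted by $\mathcal F_\hbar(f)(v)\,\mathcal F_\hbar(g)(w)$, whose integrand is $J^a(e_v\star_G e_w)$ in the first case and $J^a(e_v)\star_{st}J^a(e_w)$ in the second. By Lemma \ref{lem:3} these two integrands agree for every $v,w\in\g$, so the integrals agree and $J^a(f\star_G g)=J^a(f)\star_{st}J^a(g)$ for all $f,g$. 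Combined with $J^a(1)=1$ and $J^a u=J^\ast u+\mathcal O(\hbar)$ from the previous lemmas, this shows that $J^a$ is a unital algebra morphism and thereby completes the proof of Theorem \ref{thm:main}.

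I expect the only real obstacle to be the justification of interchanging $J^a$, $\star_G$ and $\star_{st}$ with the Fourier integral: all three are defined only asymptotically, as formal expansions extracted from oscillatory integrals, and each $e_v$ depends singularly on $\hbar$ through $e^{\frac{i}{\hbar}\langle v,\theta\rangle}$, so the manipulations above are a priori purely formal. The rigorous route is the one indicated in Remark \ref{rem:Analytical-meaning}: insert compactly supported cutoff functions to render every integral absolutely convergent, carry out the resulting (now legitimate) applications of Fubini's theorem and of differentiation under the integral sign, and only afterwards pass to the $\hbar\to0$ asymptotic expansion, which is independent of the cutoffs. The delicate point is then purely bookkeeping---verifying that the stationary phase expansions recombine order by order in $\hbar$ in a manner consistent with the Fourier superposition---rather than anything conceptual, once Lemma \ref{lem:3} is in hand.
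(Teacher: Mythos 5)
Your proposal is correct and follows essentially the same route as the paper: the paper's own proof also writes $f$ and $g$ as Fourier superpositions of the exponentials $e_{v}$, moves $J^{a}$, $\star_{G}$ and $\star_{st}$ through the integrals, and invokes Lemma \ref{lem:3} on the integrands, with the analytical justification deferred to Remark \ref{rem:Analytical-meaning} exactly as you indicate. Nothing essential is missing.
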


\begin{proof}
Suppose $\text{dim}\:\g=n$. Let $f,g\in C^{\infty}(g^{\ast})$ and
let us consider their Fourier decomposition: 
\[
f(\theta)=\int_{\mathcal{G}}\mathcal{F}_{\hbar}(v)e_{v}(\theta)\frac{dv}{(2\pi\hbar)^{\frac{n}{2}}}\qquad\text{and}\qquad g(\theta)=\int_{\mathcal{G}}\mathcal{F}_{\hbar}(w)e_{w}(\theta)\frac{dw}{(2\pi\hbar)^{\frac{n}{2}}}.
\]
Then we compute
\begin{eqnarray*}
J^{a}\big(f\star_{G}g\big) & = & J^{a}\left(\Big(\int_{\mathcal{G}}\mathcal{F}_{\hbar}f(v)e_{v}(\theta)\frac{dv}{(2\pi\hbar)^{\frac{n}{2}}}\Big)\star_{G}\Big(\int_{\mathcal{G}}\mathcal{F}_{\hbar}g(w)e_{w}(\theta)\frac{dw}{(2\pi\hbar)^{\frac{n}{2}}}\Big)\right),\\
 & = & \int_{\mathcal{G}\times\mathcal{G}}J^{a}(e_{v}\star_{G}e_{w})\mathcal{F}_{\hbar}f(v)\mathcal{F}_{\hbar}g(w)\frac{dvdw}{(2\pi\hbar)^{n}}\\
 & = & \int_{\g\times\g}\big(J^{a}(e_{v})\star_{st}J^{a}(e_{w})\big)\mathcal{F}_{\hbar}f(v)\mathcal{F}_{\hbar}g(w)\frac{dvdw}{(2\pi\hbar)^{n}}\\
 & = & J^{a}\Big(\int_{\g}\mathcal{F}_{\hbar}f(v)e_{v}\frac{dv}{(2\pi\hbar)^{\frac{n}{2}}}\Big)\star_{st}J^{a}\Big(\int_{\g}\mathcal{F}_{\hbar}g(w)e_{w}\frac{dw}{(2\pi\hbar)^{\frac{n}{2}}}\Big)\\
 & = & J^{a}(f)\star_{st}J^{a}(g),
\end{eqnarray*}
which concludes the proof.
\end{proof}

\subsection{Proof of Theorem \ref{thm:second}}

Let us start with the first identity in Theorem \ref{thm:second}. 

\begin{prop}
For $v\in\g$, seen as a linear function on $\g^{\ast}$, the following
identity holds true
\begin{equation}
J^{a}(v)=J^{\ast}v+\frac{\hbar}{i}(D_{e}a)v.\label{eq:J_on_Lie_elements}
\end{equation}
 \end{prop}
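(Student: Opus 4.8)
The plan is to read $J^a(v)$ straight off the Feynman expansion (\ref{eq:exp}) (equivalently (\ref{eq:asymfor})), specialized to the external function $u(\theta)=\langle\theta,v\rangle$, and to argue that linearity collapses the graph sum to just two terms. First I would record the effect of linearity at the $u$-vertex: since $u$ is linear in $\theta$, the vertex rule gives the factor $\frac{\partial^{l}u}{\partial\theta_{i_1}\cdots\partial\theta_{i_l}}(J(x,\xi))$, which vanishes for $l\geq 2$ and equals the component $v^{i_1}$ for $l=1$. Hence only graphs whose $u$-vertex has valence at most one can contribute.

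Next I would combine this with the structural analysis already in place (Table \ref{tab:nonVanishingPropagators} and the Corollary describing Figure \ref{fig:GeneralGraphs}). Because the $vv$-block of $B^{-1}$ vanishes (Lemma \ref{lem:phase}), every surviving propagator must carry at least one $\theta$-labelled extremity; and since the $u$-vertex is the only vertex producing $\theta$-derivatives, every edge of a non-vanishing graph is incident to $u$. An internal vertex has valence $\geq 3$ with all edges $v$-labelled, so it would require at least three edges running to $u$, and a $\theta\theta$-self-loop would occupy valence two at $u$; both are excluded the moment $u$ has valence $\leq 1$. Therefore exactly two graphs survive: the empty graph, and the graph consisting of a single $v\theta$-edge joining the $u$-vertex to the $a$-vertex.

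Then I would evaluate these two contributions. The empty graph has $|E|=|V^{\mathrm{int}}|=0$ and amplitude $u(J(x,\xi))\,a_e(x,\xi)=(J^{\ast}v)(x,\xi)$, using $\theta_0=J(x,\xi)$ from Lemma \ref{lem:phase} and $a_e\equiv 1$; this yields the leading term $J^{\ast}v$. The single-edge graph has $|E|=1$, $|V^{\mathrm{int}}|=0$, trivial automorphism group, and $\hbar$-weight $(i\hbar)^{1}$; its amplitude is the contraction of the propagator value $(B^{-1})^{v^{j}\theta_{i}}=-\delta^{j}_{i}$ with $\partial u/\partial\theta_{i}=v^{i}$ at the $u$-vertex and $\partial_{v^{j}}a_{\exp(v)}(x,\xi)|_{v=0}$ at the $a$-vertex, namely $-\sum_{i}v^{i}\,\partial_{v^{i}}a_{\exp(v)}(x,\xi)|_{v=0}=-(D_e a)v$ (here $a$ is carried along as a full formal series, so no extra $\hbar$-powers intervene). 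Multiplying by $i\hbar$ and using $1/i=-i$ converts this into $\frac{\hbar}{i}(D_e a)v$, and summing the two contributions gives exactly (\ref{eq:J_on_Lie_elements}).

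The main obstacle, and the step I would take most care over, is the exclusion argument of the second paragraph: verifying that no graph with internal vertices or with $\theta\theta$-self-loops can survive once $u$ is linear. This rests entirely on the vanishing of the $vv$-block of $B^{-1}$ together with the valence bookkeeping at the unique $\theta$-source $u$. The remaining points — identifying the value of the surviving propagator and the sign/factor conversion $i\hbar\cdot(-1)=\hbar/i$ — are routine once the graph combinatorics has been pinned down.
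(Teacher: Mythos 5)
Your proof is correct and follows essentially the same route as the paper's: both read \eqref{eq:J_on_Lie_elements} off the Feynman expansion \eqref{eq:exp}, using the linearity of $l_{v}$ together with the structure of $B^{-1}$ from Lemma \ref{lem:phase} to reduce the graph sum to the empty graph and the single $v\theta$-edge, whose amplitudes give the two terms. Your packaging of the exclusion step (every non-vanishing edge must be incident to the $u$-vertex, which can have valence at most one) is a slightly tidier version of the paper's case analysis, and your evaluation of the two surviving amplitudes, including the sign and the factor $i\hbar=\hbar/i\cdot(-1)^{-1}$ bookkeeping, agrees with the paper.
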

 
\begin{proof}
First we observe that there is no graph $\Gamma$ with internal vertex
such that $B_{\Gamma}(a,v)\neq0$: Suppose that $\Gamma$ has an internal
vertex and that $B_{\Gamma}(a,v)\neq0$. Then one and only one of
the edges stemming out of this vertex must land on the external vertex
labelled with $v$ (if none is landing on $v$, we end up with a zero
propagator, since this internal vertex has (at least) two more edges
decorated by $v$), and if more than one edge is landing on $v$,
then we differentiate twice a linear function. 

Also notice that a Feynman graph such that $B_{\Gamma}(a,v)\neq0$
can not have any self loop based on $v$, because any of these loops
would involve at least a factor of the form $\frac{\partial l_{v}}{\partial v^{i}}$
or a factor of the form $\frac{\partial^{2}l_{v}}{\partial\theta_{i}\partial\theta_{j}}$
in the corresponding amplitude. Since $l_{v}$ is a linear function
depending on $\theta$ only, both factors would yield a zero amplitude.
Loops based on the external vertex decorated by $a$ will also yield
a zero amplitude, since every non-vanishing propagator involve at
least one derivative in the $\theta$-direction and $a$ depends on
the variable $v$ only. 

The only remaining possibilities are the graph $\Gamma_{0}$ with
no internal vertex nor edge and the graph $\Gamma_{1}$ formed by
the two external vertices decorated by $a$ and $v$ respectively
and a single edge with label $v^{i}$ on the $a$ extremity and with
label $\theta_{i}$ on the $v$ extremity (multi-edges would yield
multiple derivations of the linear function $l_{v}$, and hence yield
zero). (We could have seen all this immediately by direct inspection
of Figure \ref{fig:GeneralGraphs}.)

The amplitudes of these two graphs correspond to the two terms in
(\ref{eq:J_on_Lie_elements}).
\end{proof}

Let us prove the second identity of Theorem \ref{thm:second}.

\begin{prop}
\label{prop:Jv}The following identity holds true: 
\begin{equation}
t_{v}^{a}=Op\big(\frac{i}{\hbar}J^{a}(v)\big),\label{eq:derivative_of_T}
\end{equation}
where $v\in\mathcal{G}$ is regarded as a linear function on $\mathcal{G}^{*}$. 
\end{prop}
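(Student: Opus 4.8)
The plan is to prove the operator identity $t_v^a = \Op\big(\tfrac{i}{\hbar}J^a(v)\big)$ by relating both sides to the representation $T^a$ via differentiation at the group unit. The key observation is Lemma \ref{lem:2}, which already gives a clean bridge: we know that $\Op\big(J^a(e_v)\big) = T^a_{\exp(v)}$, where $e_v(\theta) = e^{\frac{i}{\hbar}\langle v,\theta\rangle}$. Since $t_v^a = (D_e T^a)v$ is by definition the derivative of $g \mapsto T^a_g$ at $g = e$ in the direction $v$, the natural strategy is to differentiate the identity of Lemma \ref{lem:2} with respect to the curve $t \mapsto \exp(tv)$ at $t = 0$.

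Concretely, I would first write $\tfrac{d}{dt}\big|_{t=0} T^a_{\exp(tv)} = t_v^a$, which is just the definition of $D_e T^a$ applied to $v$, since $\tfrac{d}{dt}\big|_{t=0}\exp(tv) = v$. On the other side of Lemma \ref{lem:2}, I would compute $\tfrac{d}{dt}\big|_{t=0}\Op\big(J^a(e_{tv})\big) = \Op\big(\tfrac{d}{dt}\big|_{t=0} J^a(e_{tv})\big)$, using that $\Op$ is linear and commutes with $t$-differentiation (it is an integral operator with the derivative passing inside). The remaining task is then purely a computation on the domain $C^\infty(\g^*)[[\hbar]]$: since $J^a$ is $\hbar$-linear and $e_{tv}(\theta) = e^{\frac{it}{\hbar}\langle v,\theta\rangle}$, I have $\tfrac{d}{dt}\big|_{t=0} e_{tv}(\theta) = \tfrac{i}{\hbar}\langle v,\theta\rangle = \tfrac{i}{\hbar}\, v(\theta)$, viewing $v$ as the linear function $\theta \mapsto \langle v,\theta\rangle$ on $\g^*$. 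Therefore $\tfrac{d}{dt}\big|_{t=0} J^a(e_{tv}) = J^a\big(\tfrac{i}{\hbar}v\big) = \tfrac{i}{\hbar}J^a(v)$ by linearity of $J^a$. Assembling the two sides yields exactly $t_v^a = \Op\big(\tfrac{i}{\hbar}J^a(v)\big)$.

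The one point requiring care, and the main technical obstacle, is justifying the interchange of the $t$-derivative with the formal oscillatory integral defining $J^a$ and with $\Op$. Because everything is understood in the asymptotic/formal sense of Remark \ref{rem:Analytical-meaning}, the cleanest justification is to perform the differentiation on the absolutely convergent cutoff integrals (where dominated convergence and differentiation under the integral sign apply in the standard way) and then pass to the asymptotic expansion; the resulting formal power series is independent of the cutoff. Alternatively, one can argue entirely at the level of formal power series in $\hbar$, where $J^a$ and $\Op$ are given order by order by differential operators with coefficients depending smoothly on the group parameter, so differentiation in $t$ is manifestly term-by-term and commutes with both maps. Either way the interchange is legitimate, and the identity follows.

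It is worth noting that this short argument sidesteps the Feynman-graph machinery entirely: rather than recomputing $t_v^a$ from the explicit expansion of $J^a(v)$ obtained in the previous proposition, it leverages the already-established operator identity of Lemma \ref{lem:2}, which is both shorter and more transparent. A consistency check against the previous proposition, namely that $J^a(v) = J^*v + \tfrac{\hbar}{i}(D_e a)v$, confirms the result: applying $\tfrac{i}{\hbar}$ and quantizing should reproduce precisely the first-order term $t_v^a = (D_e T^a)v$ of the representation $T^a$, matching the leading pullback action in \eqref{eq:formal_rep_on_conf} together with its first $\hbar$-correction governed by $D_e a$.
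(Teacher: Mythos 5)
Your proof is correct, and it reaches the identity by a slightly different route than the paper. The paper's proof differentiates the Schwartz kernel of $T^a_{\exp(tv)}$ directly at $t=0$: it computes $\frac{d}{dt}\big(a_{\exp(tv)}(x,\bar\xi)\,e^{\frac{i}{\hbar}\langle\bar\xi,\varphi_{\exp(-tv)}x\rangle}\big)\big|_{t=0} $ under the integral sign and then recognizes the resulting symbol $(D_ea)(v)-\frac{i}{\hbar}\langle\xi,X^v(x)\rangle$ as $\frac{i}{\hbar}J^a(v)$ via the first identity of Theorem \ref{thm:second}, $J^a(v)=J^*v+\frac{\hbar}{i}(D_ea)v$. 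You instead differentiate the operator identity $\Op\big(J^a(e_{tv})\big)=T^a_{\exp(tv)}$ of Lemma \ref{lem:2} and push the $t$-derivative through $\Op$ and $J^a$ onto $e_{tv}$, using $\frac{d}{dt}\big|_{t=0}e_{tv}=\frac{i}{\hbar}v$ and linearity of $J^a$. Since $J^a(e_{tv})(x,\xi)=e^{-\frac{i}{\hbar}\langle\xi,x\rangle}a_{\exp(tv)}(x,\xi)e^{\frac{i}{\hbar}\langle\xi,\varphi_{\exp(-tv)}(x)\rangle}$, the underlying computation is the same kernel differentiation, but your packaging has two advantages: it makes the logical dependence on Lemma \ref{lem:2} explicit rather than implicit in the definition of $T^a$, and it bypasses the explicit formula for $J^a(v)$ on linear elements entirely, so the proposition comes out as a formal consequence of Lemma \ref{lem:2} plus linearity. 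The paper's version, on the other hand, exhibits the symbol concretely, which is what lets the authors read off the companion identity $J^a(v)=J^*v+\frac{\hbar}{i}(D_ea)v$ from the same calculation. Your handling of the interchange of $\frac{d}{dt}$ with the formal oscillatory integrals (cutoff plus dominated convergence, or order-by-order in $\hbar$) is adequate and in fact more careful than the paper's one-line assertion.
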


\begin{proof}
Recall that, by definition, we have
\[
t_{v}^{a}=\big(D_{e}T^{a}\big)(v)=\frac{d}{dt}T_{\exp(tv)}^{a}\vert_{t=0},\qquad v\in\mathcal{G}.
\]
 Then, interchanging derivation and integration and using (\ref{eq:derivative_of_T}),
we obtain
\begin{eqnarray*}
t_{v}^{a} & = & \int_{T^{\ast}\bR^{d}}\psi(\overline{x})\frac{d}{dt}\left(a_{\exp(tv)}(x,\overline{\xi})e^{\frac{i}{\hbar}\langle\bar{\xi},\varphi_{\exp(-tv)}x\rangle}\right)\vert_{t=0}e^{-\frac{i}{\hbar}\langle\overline{\xi},\overline{x}\rangle}\frac{d\overline{x}d\overline{\xi}}{(2\pi\hbar)^{\frac{d}{2}}},\\
 & = & \int_{T^{\ast}\bR^{d}}\psi(\overline{x})\big(D_{e}a(x,\xi\big)(v)-\langle\xi,X^{v}(x)\rangle\big)e^{\frac{i}{\hbar}\langle\bar{\xi},x\rangle}e^{-\frac{i}{\hbar}\langle\overline{\xi},\overline{x}\rangle}\frac{d\overline{x}d\overline{\xi}}{(2\pi\hbar)^{\frac{d}{2}}},\\
 & = & \int_{T^{\ast}\bR^{d}}\psi(\overline{x})\left(\frac{i}{\hbar}J^{a}(v)\right)e^{\frac{i}{\hbar}\langle\bar{\xi},x-\bar{x}\rangle}\frac{d\overline{x}d\overline{\xi}}{(2\pi\hbar)^{\frac{d}{2}}},
\end{eqnarray*}
which concludes the proof. 
\end{proof}

At last, we are ready for the proof of the last identity of Theorem
\ref{thm:second}, which corresponds to a deformation of Ping Xu's
second condition for quantum momentum maps:

\begin{prop}
We have that\textup{
\[
\frac{i}{\hbar}[J^{a}(v),f]=\tilde{t}_{v}^{a}(f)
\]
for all $v\in\mathcal{G}$ and $f\in C^{\infty}(T^{*}\R^{d})[[\hbar]]$. }
\end{prop}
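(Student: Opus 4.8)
The plan is to compute both sides of the identity $\frac{i}{\hbar}[J^{a}(v),f]=\tilde{t}_{v}^{a}(f)$ by reducing them to statements about the already-established operator-level identities, rather than grinding through the Feynman expansion again. The key observation is that Proposition \ref{prop:Jv} tells us $\Op\big(\frac{i}{\hbar}J^{a}(v)\big)=t_{v}^{a}$, so the commutator on the left translates, under the standard quantization $\Op$, into an operator commutator. First I would apply $\Op$ to both sides of the desired identity and use the defining property of the standard product (\ref{eq:standard_product}), namely that $\Op(f\star_{st}g)=\Op(f)\circ\Op(g)$, to turn the star-commutator $\frac{i}{\hbar}[J^{a}(v),f]_{\star_{st}}$ into $\frac{i}{\hbar}\big(\Op(J^{a}(v))\circ\Op(f)-\Op(f)\circ\Op(J^{a}(v))\big)$.

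Next I would invoke Proposition \ref{prop:Jv} to rewrite $\frac{i}{\hbar}\Op(J^{a}(v))=t_{v}^{a}$, so that the right-hand side of the $\Op$-applied identity becomes the operator commutator $[t_{v}^{a},\Op(f)]=t_{v}^{a}\circ\Op(f)-\Op(f)\circ t_{v}^{a}$. The goal is then to show $\Op(\tilde{t}_{v}^{a}f)=[t_{v}^{a},\Op(f)]$. This is exactly the infinitesimal version of the conjugation relation $T_{g}^{a}\Op(f)T_{g^{-1}}^{a}=\Op(\tilde{T}_{g}^{a}f)$ from (\ref{eq:quant_condition}). The strategy is to differentiate this finite relation with respect to the group parameter at the identity: writing $g=\exp(tv)$ and applying $\frac{d}{dt}\big|_{t=0}$, the left side becomes $t_{v}^{a}\circ\Op(f)+\Op(f)\circ\frac{d}{dt}T_{\exp(-tv)}^{a}\big|_{t=0}$, and since differentiating $T_{\exp(-tv)}^{a}$ gives $-t_{v}^{a}$, this collapses to the commutator $[t_{v}^{a},\Op(f)]$; the right side becomes $\Op\big(\frac{d}{dt}\tilde{T}_{\exp(tv)}^{a}f\big|_{t=0}\big)=\Op(\tilde{t}_{v}^{a}f)$ by definition of $\tilde{t}_{v}^{a}=(D_{e}\tilde{T}^{a})v$.

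Having obtained $\Op(\tilde{t}_{v}^{a}f)=[t_{v}^{a},\Op(f)]=\Op\big(\frac{i}{\hbar}[J^{a}(v),f]_{\star_{st}}\big)$, I would conclude by the injectivity of $\Op$ that $\tilde{t}_{v}^{a}f=\frac{i}{\hbar}[J^{a}(v),f]$, which is the claim. I expect the main obstacle to be purely formal rather than conceptual: one must justify interchanging the $t$-derivative with $\Op$ and with the group representation in the formal power series setting, and verify that the product rule applied to $T_{\exp(tv)}^{a}\circ\Op(f)\circ T_{\exp(-tv)}^{a}$ indeed produces precisely the commutator with no leftover terms. Since $T_{e}^{a}=\operatorname{id}$ (the representation sends the unit to the identity) and differentiating the two factors $T_{\exp(tv)}^{a}$ and $T_{\exp(-tv)}^{a}$ yields $t_{v}^{a}$ and $-t_{v}^{a}$ respectively, these are routine verifications once the formal framework of Remark \ref{rem:formal_version} is invoked.
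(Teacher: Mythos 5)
Your proposal is correct and follows essentially the same route as the paper's own proof: differentiate the conjugation relation (\ref{eq:quant_condition}) at $g=\exp(tv)$ to obtain $[t_{v}^{a},\Op(f)]=\Op(\tilde{t}_{v}^{a}f)$, substitute $t_{v}^{a}=\Op\big(\tfrac{i}{\hbar}J^{a}(v)\big)$ from Proposition \ref{prop:Jv}, convert the operator commutator to the star-commutator via $\Op(f\star_{st}g)=\Op(f)\circ\Op(g)$, and conclude by injectivity of $\Op$. The only difference is the order in which these identical ingredients are assembled.
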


\begin{proof}
Consider Equation (\ref{eq:quant_condition}) evaluated at $g=\exp(tv)$
with $v\in\g$: i.e.
\[
T_{\exp(tv)}^{a}\Op(f)T_{\exp(-tv)}^{a}=\Op(\tilde{T}_{\exp(tv)}^{a}f).
\]
Differentiating this last equation w.r.t. the variable $t$ at $t=0$,
we obtain
\[
\big[t_{v}^{a},Op\big(f\big)\big]=Op(\widetilde{t_{v}^{a}}f\big)
\]
where $\tilde{t}_{v}^{a}=\big(D_{e}\tilde{T}^{a}\big)v$ for $v\in\g$.
Then, by Proposition (\ref{prop:Jv}), we have that
\[
\big[Op\big(\frac{i}{\hbar}J^{a}(v)\big),Op\big(f\big)\big]=Op\big(\widetilde{t_{v}^{a}}f\big)
\]
and finally that
\[
\Op\left(\frac{i}{\hbar}\big[J^{a}(v),f\big]\right)=\Op(\widetilde{t_{v}^{a}}f),
\]
which concludes the proof by injectivity of $\Op$.
\end{proof}

\section{Invariant Hamiltonians}

In this section, we consider certain Hamiltonians invariant w.r.t.
the cotangent lift of an action of a Lie group $G$ on $\bR^{d}$.
In the classical case, invariant Hamiltonians can be obtained as images
of invariant functions in $C^{\infty}(\mathcal{G}^{*})^{G}$ by the
classical comomentum map (i.e. the pullback of $J$ defined in (\ref{eq:cmomap})).
However, the quantum Hamiltonians resulting from the quantization
of these invariant classical Hamiltonians are in general no longer
invariant w.r.t. to the quantized action: anomalies appear. We show
here how to use $G-$systems and their associated quantizations $J^{a}$
defined in (\ref{eq:moment}) to obtain classical invariant Hamiltonians
that are still invariant upon quantization w.r.t. to the quantization
$T^{a}$ (associated with the same $G$-system $a$) of the action.
In other words, we explain how to use $G$-systems to obtain both
quantum symmetries and invariant Hamiltonians with no anomalies upon
quantization. 

Let us start by recalling the classical case.

\subsection{Classical case}

Let $\varphi$ be a smooth action of a Lie group $G$ on $\R^{d}$,
and consider the corresponding hamiltonian action $\tilde{\varphi}$
on the cotangent bundle $T^{*}\R^{d}$ given by cotangent lift with
momentum map $J$ given by (\ref{eq:cmomap}). We will denote by $Ad_{g}^{\sharp}$
the coadjoint action of $G$ on $\g^{\ast}$. It induces an action
on $C^{\infty}(\g^{\ast})$ by pullbacks, which we will still denote
the same way:
\[
(Ad_{g}^{\sharp}f)(\alpha)=f(Ad_{g^{-1}}^{\sharp}\alpha)
\]
for all $f\in C^{\infty}(\g^{\ast})$ and $\alpha\in\g^{\ast}$. Equivariance
of the momentum map implies equivariance of its pullback, the comomentum
map: 
\begin{lem}
For every $f\in C^{\infty}(\g^{\ast})$, the following identity holds
true: 
\[
J^{\ast}\big(Ad_{g}^{\sharp}f\big)=\tilde{\varphi}_{g^{-1}}^{\ast}J^{\ast}(f).
\]

\end{lem}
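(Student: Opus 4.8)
The plan is to deduce the stated identity from the equivariance of the momentum map $J$ itself, which is built into its definition as recalled above: with respect to the cotangent lift action on the domain and the coadjoint action on the range, one has
\[
J\circ\tilde{\varphi}_{g}=Ad_{g}^{\sharp}\circ J
\]
for every $g\in G$. Replacing $g$ by $g^{-1}$ gives $J(\tilde{\varphi}_{g^{-1}}(m))=Ad_{g^{-1}}^{\sharp}(J(m))$ for all $m\in T^{*}\R^{d}$, which is the form I will actually use.

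First I would evaluate the left-hand side at an arbitrary point $m=(x,\xi)\in T^{*}\R^{d}$, unwinding the pullback and the definition of the induced coadjoint action on functions:
\[
\big(J^{\ast}(Ad_{g}^{\sharp}f)\big)(m)=\big(Ad_{g}^{\sharp}f\big)(J(m))=f\big(Ad_{g^{-1}}^{\sharp}J(m)\big).
\]
Next I would do the same for the right-hand side, now invoking the equivariance relation in the final step:
\[
\big(\tilde{\varphi}_{g^{-1}}^{\ast}J^{\ast}f\big)(m)=\big(J^{\ast}f\big)(\tilde{\varphi}_{g^{-1}}(m))=f\big(J(\tilde{\varphi}_{g^{-1}}(m))\big)=f\big(Ad_{g^{-1}}^{\sharp}J(m)\big).
\]
Since both computations terminate at the same value $f(Ad_{g^{-1}}^{\sharp}J(m))$ for every $m$, the two functions coincide and the identity follows.

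There is essentially no analytic difficulty here; the only point requiring care is the bookkeeping of conventions, namely that the pullback action $(Ad_{g}^{\sharp}f)(\alpha)=f(Ad_{g^{-1}}^{\sharp}\alpha)$ carries an inverse while the momentum map equivariance $J\circ\tilde{\varphi}_{g}=Ad_{g}^{\sharp}\circ J$ does not, and that these two inverses must match on the two sides. If one prefers not to take equivariance of $J$ as given, it can be verified directly from the explicit formula $J(x,\xi)=-\langle\xi,X^{\cdot}(x)\rangle$ together with the standard relation $(\varphi_{g})_{\ast}X^{v}=X^{Ad_{g}v}$ between fundamental vector fields and the adjoint action; this is the only step that uses anything beyond formal unwinding of definitions, and it is a routine computation with the cotangent lift.
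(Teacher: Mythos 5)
Your proof is correct and matches the paper's intent exactly: the paper states this lemma as an immediate consequence of the equivariance $J\circ\tilde{\varphi}_{g}=Ad_{g}^{\sharp}\circ J$ without writing out the pointwise unwinding, which is precisely what you supply. The bookkeeping of the inverse in $(Ad_{g}^{\sharp}f)(\alpha)=f(Ad_{g^{-1}}^{\sharp}\alpha)$ is handled consistently with the paper's conventions, so there is nothing to add.
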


\begin{defn}
We denote by $C^{\infty}(\mathcal{G}^{*})^{G}$ the space of $Ad^{\sharp}$-invariant
(i.e. $Ad_{g}^{\sharp}f=f$ for all $g\in G$). Observe that this
space coincides with the center $Z$ of the Lie algebra $(C^{\infty}(\mathcal{G}^{*}),\{\,,\,\})$. 
\end{defn}

From the lemma above, we have that

\begin{cor}
\label{cor:invfunctions}Let $f\in C^{\infty}(\g^{\ast})^{G}$ be
an $Ad^{\sharp}$-invariant function. Then, its image by $J^{*}$
is invariant w.r.t. the cotangent lift action:
\begin{equation}
{\tilde{\varphi}_{g}}^{\ast}J^{\ast}f=J^{\ast}f.\label{eq:invf}
\end{equation}
for all $g\in G$
\end{cor}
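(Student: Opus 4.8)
The plan is to derive this corollary directly from the preceding lemma, which states that $J^{\ast}\big(Ad_{g}^{\sharp}f\big)=\tilde{\varphi}_{g^{-1}}^{\ast}J^{\ast}(f)$ for every $f\in C^{\infty}(\g^{\ast})$. The whole point is that an $Ad^{\sharp}$-invariant function collapses the left-hand side, and we simply need to read off the consequence for the right-hand side.

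First I would fix $f\in C^{\infty}(\g^{\ast})^{G}$ and an arbitrary $g\in G$, and apply the lemma to this $f$. By the definition of $C^{\infty}(\mathcal{G}^{*})^{G}$, we have $Ad_{g}^{\sharp}f=f$ (and in particular $Ad_{g^{-1}}^{\sharp}f=f$ as well, since the invariance holds for all group elements). Substituting $Ad_{g^{-1}}^{\sharp}f=f$ into the lemma applied at the group element $g^{-1}$ gives
\[
J^{\ast}f=J^{\ast}\big(Ad_{g^{-1}}^{\sharp}f\big)=\tilde{\varphi}_{g}^{\ast}J^{\ast}(f),
\]
which is exactly \eqref{eq:invf}. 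One should be slightly careful about the index convention: the lemma pairs the coadjoint element $g$ with the cotangent-lift pullback $\tilde{\varphi}_{g^{-1}}^{\ast}$, so to land on $\tilde{\varphi}_{g}^{\ast}$ one applies the lemma at $g^{-1}$ rather than at $g$. Either way, since $g$ ranges over all of $G$, the resulting identity holds for all $g$.

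There is essentially no obstacle here: this is a one-line specialization of the equivariance lemma, and the only thing to watch is the bookkeeping of inverses between the coadjoint action and the cotangent-lift pullback so that the final statement reads with $\tilde{\varphi}_{g}^{\ast}$ rather than its inverse. The substantive content lives entirely in the lemma (equivariance of the comomentum map) and in the identification of $C^{\infty}(\mathcal{G}^{*})^{G}$ with the Poisson center; the corollary merely packages this as the classical statement that the comomentum image of an invariant function is an invariant Hamiltonian, which is the classical baseline that the later sections will deform using the $G$-system quantizations $J^{a}$.
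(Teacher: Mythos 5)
Your proof is correct and is essentially the paper's own argument: the paper simply states the corollary as an immediate consequence of the equivariance lemma ("From the lemma above, we have that"), which is exactly the one-line specialization you carry out. Your extra care with the inverse bookkeeping (applying the lemma at $g^{-1}$ so as to land on $\tilde{\varphi}_{g}^{\ast}$) is a valid and harmless refinement of the same step.
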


We will denote the invariant functions on $T^{*}\R^{d}$ as in Corollary
\ref{cor:invfunctions} by
\[
H_{f}:=J^{\ast}f,\qquad f\in C^{\infty}(\mathcal{G}^{*})^{G},
\]
and call them simply \textbf{invariant Hamiltonians}.

\subsection{Quantum case}

Given an invariant Hamiltonian $H_{f}$ as in the last paragraph,
its quantization (here we take the standard quantization for simplicity),
i.e. the pseudo-differential operator
\[
\hat{H}_{f}:=\Op(H_{f})
\]
is in general not a quantum invariant Hamiltonian for the trivial
quantization of the symmetries, since
\begin{equation}
T_{g}\hat{H}_{f}T_{g^{-1}}=\varphi_{g}^{*}\hat{H}_{f}\varphi_{g^{-1}}=\hat{H}_{f}+\mathcal{O}(\hbar),\label{eq:anomalies}
\end{equation}
as exemplified by Egorov's theorem (see \cite{M} for instance). The
corrections in $\hbar$ in (\ref{eq:anomalies}) preventing $\hat{H}_{f}$
to be an invariant quantum hamiltonian (i.e. $T_{g}\hat{H}_{f}T_{g^{-1}}=\hat{H}_{f}$)
are called \textbf{anomalies}. 

Given a formal $G$-system $a$, we deform both an invariant Hamiltonian
$H_{f}$ by using the corresponding quantization of the momentum map
\[
H_{f}^{a}:=J^{a}(f)=H_{f}+\mathcal{O}(\hbar),
\]
and the quantum symmetries by using $T^{a}$ instead of $T$:
\[
T_{g}^{a}AT_{g^{-1}}^{a}=\varphi_{g}^{*}A\varphi_{g^{-1}}^{*}+\mathcal{O}(\hbar),
\]
where $A$ is a pseudo-differential operator. We will show that, in
this case, $H_{f}^{a}$ is still invariant as a classical Hamiltonian
(i.e w.r.t. the cotangent lift action), and, moreover, that the quantization
can be performed without anomalies: i.e.
\[
T_{g}^{a}\hat{H}_{f}^{a}T_{g^{-1}}=\hat{H}_{f}^{a},
\]
where the quantum Hamiltonian is now $\hat{H}_{f}^{a}:=\Op(H_{f}^{a})$.
Let us start by a (rather trivial) example:

\begin{example}
Consider the affine action $\varphi_{v}(x)=x+v$ of $\R^{d}$ on itself
as in Example \ref{exa:translations}. Then the trivial $G$-sysetm
$a=1$ gives $J^{a}f(\xi)=f(-\xi)$, which is obviously invariant
by cotangent lifts of the action. The center $Z$ is the whole space
of functions on $(\R^{d})^{*}$, since the Lie group is abelian, and
the classical invariant Hamiltonians is the space of function $J^{a}(Z)$
on the cotangent bundle depending on the impulsion only. Clearly,
quantization happens in this case without anomalies. However, this
situation is quite degenerate, since, here, $J^{a}$ coincides with
the comomentum map $J^{*}$. Observe that taking $f(\xi)=\xi^{2}$
in the center, $J^{a}f(\xi)=\xi^{2}$ corresponds to the Hamiltonian
of the free particle, whose quantization $H_{f}^{a}$ is the Laplace
operator, which is invariant by quantization of the translations.
Here, the trivial $G$-system gives back the usual story. 
\end{example}

The main result of this paragraph is the following

\begin{thm}
\label{theo:inv} Let $a$ be a formal $G$-system, then
\begin{equation}
T_{g}^{a}\Op\big(J^{a}(f)\big)T_{g^{-1}}^{a}=\Op\big(J^{a}(Ad_{g}^{\sharp}f)\big),\label{eq:invhampre}
\end{equation}
for all $f\in C^{\infty}(\mathcal{G}^{*})$. 
\end{thm}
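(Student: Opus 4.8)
The plan is to reduce the operator identity \eqref{eq:invhampre} to the fact, already proven, that $J^{a}$ is an algebra morphism together with the representation property of the operators $T^{a}_{g}$. The key observation is that the three ingredients needed are all available: Lemma \ref{lem:2} (which identifies $\Op(J^{a}(e_{v}))$ with $T^{a}_{\exp(v)}$), the multiplicativity $J^{a}(f\star_{G}g)=J^{a}(f)\star_{st}J^{a}(g)$, and the defining relation \eqref{eq:standard_product} between $\star_{st}$ and composition of pseudodifferential operators. Rather than manipulate the oscillatory integral \eqref{eq:moment} directly, I would work entirely at the level of operators, where $\Op$ is injective, and exploit the exponential functions $e_{v}$ as a ``spanning set'' exactly as in the proof that $J^{a}$ is an algebra morphism.

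First I would reduce to the case $f=e_{v}$ by Fourier decomposition. Since every $f\in C^{\infty}(\g^{\ast})$ can be written as $f=\int_{\g}\mathcal{F}_{\hbar}f(v)\,e_{v}\,\frac{dv}{(2\pi\hbar)^{n/2}}$, and since both sides of \eqref{eq:invhampre} are linear (indeed continuous) in $f$, it suffices to verify the identity when $f=e_{v}$. The central computation is then to understand how the coadjoint action interacts with $e_{v}$: I expect the clean statement
\begin{equation}
Ad^{\sharp}_{g}e_{v}=e_{Ad_{g}v},
\end{equation}
which follows from the definition $(Ad^{\sharp}_{g}e_{v})(\theta)=e_{v}(Ad^{\sharp}_{g^{-1}}\theta)=e^{\frac{i}{\hbar}\langle v,Ad^{\sharp}_{g^{-1}}\theta\rangle}=e^{\frac{i}{\hbar}\langle Ad_{g}v,\theta\rangle}$. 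Granting this, the right-hand side of \eqref{eq:invhampre} with $f=e_{v}$ becomes $\Op(J^{a}(e_{Ad_{g}v}))=T^{a}_{\exp(Ad_{g}v)}$ by Lemma \ref{lem:2}.

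Next I would compute the left-hand side. Writing $g=\exp(w)$ (working with germs near the identity, as the formal setting permits), Lemma \ref{lem:2} gives $\Op(J^{a}(e_{v}))=T^{a}_{\exp(v)}$, so the left-hand side reads $T^{a}_{\exp(w)}\,T^{a}_{\exp(v)}\,T^{a}_{\exp(-w)}$. Since $\{T^{a}_{g}\}$ is a genuine representation of $G$ (as established in \cite{BI} and used already in Lemma \ref{lem:3}), this product equals $T^{a}_{\exp(w)\exp(v)\exp(-w)}=T^{a}_{\exp(Ad_{\exp(w)}v)}=T^{a}_{\exp(Ad_{g}v)}$, using the standard identity $g\exp(v)g^{-1}=\exp(Ad_{g}v)$. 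This matches the right-hand side computed above, so \eqref{eq:invhampre} holds for $f=e_{v}$, and hence for all $f$ by the Fourier decomposition argument.

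I anticipate that the only genuinely delicate point is the reduction step: one must be careful that the Fourier decomposition and the interchange of $\Op(J^{a}(\cdot))$, the conjugation by $T^{a}_{g}$, and the integral over $\g$ are all legitimate. This is exactly the kind of formal manipulation flagged in Remark \ref{rem:Analytical-meaning}, so I would justify it by inserting compactly supported cutoff functions, noting that the conjugation $T^{a}_{g}(\cdot)T^{a}_{g^{-1}}$ is $\hbar$-linear and continuous on the relevant formal space, and invoking the injectivity of $\Op$ at the end. A secondary subtlety is the passage $g=\exp(w)$, valid only near the identity; but since \eqref{eq:invhampre} is an identity of formal objects depending smoothly (indeed analytically in the group parameter through $Ad$) on $g$, establishing it on a neighborhood of $e$ and appealing to connectedness suffices for connected $G$, while the representation property $T^{a}_{g}T^{a}_{g'}=T^{a}_{gg'}$ extends the conjugation identity to all of $G$ directly. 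Everything else is a routine bookkeeping of the exponential-function computation, so I would not grind through it in detail.
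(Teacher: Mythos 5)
Your proof is correct and is essentially the paper's own argument: both reduce to the exponentials $e_{v}$ via Fourier decomposition, use $g\exp(v)g^{-1}=\exp(Ad_{g}v)$ together with the representation property of $T^{a}$, and conclude by injectivity of $\Op$ (your identity $Ad_{g}^{\sharp}e_{v}=e_{Ad_{g}v}$ is the pointwise form of the paper's Lemma \ref{lem:four_trans}). The only difference is organizational: you reuse Lemma \ref{lem:2} to identify $\Op(J^{a}(e_{v}))=T_{\exp(v)}^{a}$, whereas the paper re-derives the corresponding oscillatory-integral kernels (Lemma \ref{lem:side} and the operators $K$, $L$), and your closing observation that the conjugation identity needs no detour through $g=\exp(w)$ is right.
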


This result has the following consequence:

\begin{cor}
Given an $Ad^{\sharp}$-invariant function $f\in C^{\infty}(\mathcal{G}^{*})^{G}$,
then the corresponding quantum Hamiltonian $\hat{H}_{f}^{a}=\Op\big(J^{a}(f)\big)$
is invariant w.r.t. to the quantum symmetries $T^{a}$, i.e.
\begin{equation}
T_{g}^{a}\hat{H}_{f}T_{g^{-1}}^{a}=\hat{H}_{f}\label{eq:invham}
\end{equation}
for all $g\in G$.
\end{cor}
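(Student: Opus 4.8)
The plan is to reduce everything to the exponential functions $e_v(\theta)=e^{\frac{i}{\hbar}\langle v,\theta\rangle}$ introduced before Lemma \ref{lem:1}, exactly as in the proof that $J^a$ is an algebra morphism. Since any $f\in C^\infty(\g^\ast)$ decomposes via the asymptotic Fourier transform as $f=\int_\g \mathcal{F}_\hbar f(v)\,e_v\,\frac{dv}{(2\pi\hbar)^{n/2}}$ with $n=\dim\g$, and since both sides of \eqref{eq:invhampre} are linear in $f$ (the conjugation $A\mapsto T_g^a A T_{g^{-1}}^a$, the quantization $\Op$, and the morphism $J^a$ are all linear), it suffices to establish the identity for $f=e_v$ and then integrate back, checking that the coadjoint action passes correctly through the Fourier integral.

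First I would treat the exponential case. By Lemma \ref{lem:2} we have $\Op(J^a(e_v))=T_{\exp(v)}^a$, so the left-hand side of \eqref{eq:invhampre} becomes $T_g^a T_{\exp(v)}^a T_{g^{-1}}^a$. Since $\{T_g^a\}_{g\in G}$ is a representation of $G$, this product equals $T_{g\exp(v)g^{-1}}^a=T_{\exp(Ad_g v)}^a$, using the elementary group identity $g\exp(v)g^{-1}=\exp(Ad_g v)$. Applying Lemma \ref{lem:2} once more in reverse gives $T_{\exp(Ad_g v)}^a=\Op(J^a(e_{Ad_g v}))$. It then remains to identify this with the right-hand side: a direct computation from the definition of the coadjoint action on functions yields $(Ad_g^\sharp e_v)(\theta)=e_v(Ad_{g^{-1}}^\sharp\theta)=e^{\frac{i}{\hbar}\langle v,Ad_{g^{-1}}^\sharp\theta\rangle}=e^{\frac{i}{\hbar}\langle Ad_g v,\theta\rangle}=e_{Ad_g v}(\theta)$, the crucial middle step being the defining duality $\langle v,Ad_{g^{-1}}^\sharp\theta\rangle=\langle Ad_g v,\theta\rangle$. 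Hence $\Op(J^a(Ad_g^\sharp e_v))=\Op(J^a(e_{Ad_g v}))$, matching the left-hand side, and \eqref{eq:invhampre} holds for all exponentials.

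To pass to a general $f$, I would feed its Fourier decomposition through both sides. On the left this is immediate from linearity of conjugation and of $\Op\circ J^a$; on the right one needs $Ad_g^\sharp\!\int \mathcal{F}_\hbar f(v)\,e_v\,dv=\int \mathcal{F}_\hbar f(v)\,e_{Ad_g v}\,dv$, which follows from the exponential case together with the change of variables $v\mapsto Ad_g v$ in the Fourier integral (whose unit Jacobian, $\det Ad_g=1$ for the relevant connected or unimodular situation, I would note but not belabor). The corollary then follows at once: for $Ad^\sharp$-invariant $f$ one has $Ad_g^\sharp f=f$, so \eqref{eq:invhampre} reduces to $T_g^a\hat H_f^a T_{g^{-1}}^a=\hat H_f^a$.

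The main obstacle I anticipate is analytical rather than conceptual. All the objects here are oscillatory integrals defined only through their $\hbar\to 0$ asymptotic expansions, so each interchange of integration with $\Op$, with conjugation by $T_g^a$, and with $Ad_g^\sharp$ must be justified by inserting the compactly supported cutoff functions of Remark \ref{rem:Analytical-meaning} and passing to the asymptotics, precisely as was done in the proof that $J^a$ is an algebra morphism. Once this formalism is set up, the computation collapses to the elementary group-theoretic manipulation above, so I expect the write-up to be short modulo the standard cutoff bookkeeping.
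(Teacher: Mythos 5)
Your proof is correct and follows essentially the same route as the paper's: both arguments reduce to the exponentials $e_{v}$ via the asymptotic Fourier transform, use $\Op\big(J^{a}(e_{v})\big)=T_{\exp(v)}^{a}$ (Lemma \ref{lem:2}) together with the representation property $T_{g}^{a}T_{\exp(v)}^{a}T_{g^{-1}}^{a}=T_{g\exp(v)g^{-1}}^{a}=T_{\exp(Ad_{g}v)}^{a}$, and match the result against $Ad_{g}^{\sharp}f$ on the Fourier side --- the paper merely packages this as Lemmas \ref{lem:four_trans} and \ref{lem:side} instead of isolating the exponential case first. One small caveat: your parenthetical claim $\det Ad_{g}=1$ is neither true in general nor needed, since applying $Ad_{g}^{\sharp}$ termwise inside the Fourier integral requires no change of variables, and in the paper's bookkeeping the Jacobians $\vert\det Ad_{g}^{\sharp}\vert$ and $\vert\det Ad_{g}\vert$ simply cancel.
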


\begin{rem}
From (\ref{eq:invham}) and (\ref{eq:quant_condition}), we observe
that $H_{f}^{a}$ is also invariant as a classical Hamiltonian. In
fact:
\[
\Op\big(J^{a}(f)\big)=T_{g}^{a}\Op\big(J^{a}(f)\big)T_{g^{-1}}^{a}=\Op\big(\tilde{\varphi}_{g^{-1}}^{\ast}J^{a}(f)+\mathcal{O}(\hbar)\big),
\]
for all $f\in C^{\infty}(\mathcal{G}^{*})$ and $g\in G$, from which
we get $\tilde{\varphi}_{g^{-1}}^{\ast}J^{a}(f)=J^{a}(f)$ for all
$g\in G$.
\end{rem}

The remaining of this section is devoted to the proof of Theorem \ref{theo:inv}.

\subsection{Proof of the Theorem \ref{theo:inv}}

Throughout this section, we suppose that $\dim\mathcal{G}=n$. We
start with the following

\begin{lem}
\label{lem:four_trans}Let $f\in C^{\infty}(\g^{\ast})$. Then, for
all $g\in G$ 
\begin{equation}
\mathcal{F}_{\hbar}(Ad_{g}^{\sharp}f)(v)=\vert\det Ad_{g}^{\sharp}\vert\mathcal{F}_{\hbar}(f)\big(Ad_{g^{-1}}(v)\big)\label{eq:fide}
\end{equation}
where $\mathcal{F}_{\hbar}$ denotes the asymptotic Fourier transform.
\end{lem}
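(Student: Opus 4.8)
The plan is to prove the identity by a single linear change of variables in the oscillatory integral defining the asymptotic Fourier transform, tracking carefully both the Jacobian and the defining relation of the coadjoint action. Since, in the sense of Remark \ref{rem:Analytical-meaning}, $\mathcal{F}_{\hbar}(Ad_{g}^{\sharp}f)(v)$ is an absolutely convergent integral once a cutoff is inserted, all the manipulations below are legitimate. First I would write out the definition together with the pullback coadjoint action $(Ad_{g}^{\sharp}f)(\alpha)=f(Ad_{g^{-1}}^{\sharp}\alpha)$:
\[
\mathcal{F}_{\hbar}(Ad_{g}^{\sharp}f)(v)=\int_{\g^{\ast}}f(Ad_{g^{-1}}^{\sharp}\alpha)\,e^{-\frac{i}{\hbar}\langle v,\alpha\rangle}\frac{d\alpha}{(2\pi\hbar)^{\frac{n}{2}}}.
\]

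Next I would substitute $\beta=Ad_{g^{-1}}^{\sharp}\alpha$, i.e. $\alpha=Ad_{g}^{\sharp}\beta$. Because $Ad^{\sharp}$ is a representation of $G$ on $\g^{\ast}$, the inverse of $Ad_{g^{-1}}^{\sharp}$ is $Ad_{g}^{\sharp}$, so this linear substitution contributes the Jacobian $d\alpha=|\det Ad_{g}^{\sharp}|\,d\beta$. The key algebraic step is to rewrite the phase using the defining relation of the coadjoint action, $\langle Ad_{g}^{\sharp}\beta,w\rangle=\langle\beta,Ad_{g^{-1}}w\rangle$ for $w\in\g$, which gives
\[
\langle v,\alpha\rangle=\langle v,Ad_{g}^{\sharp}\beta\rangle=\langle Ad_{g^{-1}}(v),\beta\rangle.
\]
Substituting these back and pulling the constant $|\det Ad_{g}^{\sharp}|$ out of the integral, the remaining integral is exactly $\mathcal{F}_{\hbar}(f)$ evaluated at $Ad_{g^{-1}}(v)$, which yields (\ref{eq:fide}).

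The only real obstacle is bookkeeping of conventions. One must fix the coadjoint action so that it is a genuine \emph{left} action on $\g^{\ast}$, which is what forces the $Ad_{g^{-1}}$ inside the pairing and hence the $Ad_{g^{-1}}(v)$ on the right-hand side; and one must confirm that the relation $(Ad_{g^{-1}}^{\sharp})^{-1}=Ad_{g}^{\sharp}$ produces precisely the factor $|\det Ad_{g}^{\sharp}|$ displayed in the statement rather than its reciprocal. Beyond these conventional checks and the formal/cutoff caveats already discussed, there is no analytic difficulty: the whole argument is the invariance of a Fourier integral under an invertible linear substitution on $\g^{\ast}$.
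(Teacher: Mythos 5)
Your proposal is correct and follows essentially the same route as the paper: both proofs write out the oscillatory integral, perform the linear substitution $\tilde\theta=Ad_{g^{-1}}^{\sharp}\theta$ (so $\theta=Ad_{g}^{\sharp}\tilde\theta$, producing the Jacobian $\vert\det Ad_{g}^{\sharp}\vert$), and move $Ad_{g}^{\sharp}$ across the pairing via $\langle Ad_{g}^{\sharp}\tilde\theta,v\rangle=\langle\tilde\theta,Ad_{g^{-1}}v\rangle$. The convention checks you flag are exactly the points the paper's computation relies on, and they work out as you describe.
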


\begin{proof}
This is a direct computation: 
\begin{eqnarray*}
\mathcal{F}_{\hbar}(Ad_{g}^{\sharp}f)(v) & = & \int_{\g^{\ast}}\big(Ad_{g}^{\sharp}f\big)(\theta)e^{-\frac{i}{\hbar}\langle\theta,v\rangle}\frac{d\theta}{(2\pi\hbar)^{\frac{n}{2}}}\\
 & = & \int_{\g^{\ast}}f(Ad_{g^{-1}}^{\sharp}\theta)e^{-\frac{i}{\hbar}\langle\theta,v\rangle}\frac{d\theta}{(2\pi\hbar)^{\frac{n}{2}}}\\
 & = & \int_{\g^{\ast}}f(\tilde{\theta})\vert\det\: Ad_{g}^{\sharp}\vert e^{-\frac{i}{\hbar}\langle\tilde{\theta},Ad_{g^{-1}}v\rangle}\frac{d\tilde{\theta}}{(2\pi\hbar)^{\frac{n}{2}}}\\
 & = & \vert\det\: Ad_{g}^{\sharp}\vert\mathcal{F}_{\hbar}(f)(Ad_{g^{-1}}v)
\end{eqnarray*}
where $\tilde{\theta}=Ad_{g^{-1}}\theta$ and where we used $\langle Ad_{g}^{\sharp}\tilde{\theta},v\rangle=\langle\tilde{\theta},Ad{g^{-1}}v\rangle$
for all $g\in G$, $\tilde{\theta}\in\g^{\ast}$ and $v\in\g$.
\end{proof}

Moreover

\begin{lem}
\label{lem:side}Given a $G$-system $a$, and for all $f\in C^{\infty}(\g^{\ast})$
and $g\in G$, the following formula holds true 
\[
J^{a}(Ad_{g}^{\sharp}f)(x,\xi)=e^{-\frac{i}{\hbar}\langle\xi,x\rangle}\int_{\g}\mathcal{F}_{\hbar}f(v)a_{g\exp(v)g^{-1}}(x,\xi)e^{\frac{i}{\hbar}\langle\xi,\varphi_{g\exp(-v)g^{-1}}(x)\rangle}\frac{dv}{(2\pi\hbar)^{\frac{n}{2}}}.
\]
 \end{lem}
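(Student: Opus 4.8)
The plan is to collapse the double oscillatory integral defining $J^{a}$ down to a single integral over $\g$, and then to move the coadjoint twisting inside the amplitude and the phase by a change of variables combined with the standard conjugation identity $\exp(Ad_{g}v)=g\exp(v)g^{-1}$. First I would integrate out the $\theta$-variable in the formal integral of Theorem \ref{thm:main}. Since the phase $S_{(x,\xi)}$ depends on $\theta$ only through the linear term $-\langle\theta,v\rangle$, performing the $\theta$-integration turns $u(\theta)$ into its asymptotic Fourier transform, producing the intermediate formula
\[
J^{a}(u)(x,\xi)=e^{-\frac{i}{\hbar}\langle\xi,x\rangle}\int_{\g}\mathcal{F}_{\hbar}(u)(v)\,a_{\exp(v)}(x,\xi)\,e^{\frac{i}{\hbar}\langle\xi,\varphi_{\exp(-v)}(x)\rangle}\frac{dv}{(2\pi\hbar)^{\frac{n}{2}}}.
\]
This is exactly the $\delta$-free analogue of the computation already carried out in the proof of Lemma \ref{lem:2} for $u=e_{v}$, so no new analytic input is needed here.

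Next I would apply this intermediate formula with $u=Ad_{g}^{\sharp}f$ and substitute the transformation rule of Lemma \ref{lem:four_trans}, namely $\mathcal{F}_{\hbar}(Ad_{g}^{\sharp}f)(v)=|\det Ad_{g}^{\sharp}|\,\mathcal{F}_{\hbar}(f)(Ad_{g^{-1}}v)$. I would then change variables $v=Ad_{g}w$, so that $dv=|\det Ad_{g}|\,dw$; since the coadjoint action is the inverse transpose of the adjoint one, $|\det Ad_{g}^{\sharp}|\cdot|\det Ad_{g}|=|\det Ad_{g^{-1}}|\cdot|\det Ad_{g}|=1$, so the Jacobian precisely cancels the factor coming from Lemma \ref{lem:four_trans}. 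Finally, using $\exp(Ad_{g}w)=g\exp(w)g^{-1}$ and its consequence $\exp(-Ad_{g}w)=\exp(Ad_{g}(-w))=g\exp(-w)g^{-1}$, I would rewrite $a_{\exp(Ad_{g}w)}=a_{g\exp(w)g^{-1}}$ and $\varphi_{\exp(-Ad_{g}w)}=\varphi_{g\exp(-w)g^{-1}}$, and after renaming $w$ back to $v$ this reproduces the claimed identity verbatim.

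I do not expect a genuine obstacle here: the computation is essentially a change of variables glued to the conjugation identity, and every ingredient is already available. The only points requiring care are the clean cancellation of $|\det Ad_{g}^{\sharp}|$ against the Jacobian $|\det Ad_{g}|$, and the fact that all integrals are understood in the formal/asymptotic sense of Remark \ref{rem:Analytical-meaning}; the change of variables $v=Ad_{g}w$ is a linear diffeomorphism of $\g$ and thus manifestly compatible with the stationary-phase asymptotics (it merely relocates the critical point), so it is legitimate at the level of the underlying absolutely convergent cutoff integrals before passing to the $\hbar\to0$ expansion.
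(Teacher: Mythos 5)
Your proposal is correct and follows essentially the same route as the paper: reduce $J^{a}$ to a single integral over $\g$ involving $\mathcal{F}_{\hbar}(u)$ (which the paper does implicitly in its first line), apply Lemma \ref{lem:four_trans}, change variables by $Ad_{g^{-1}}$ so that the Jacobian cancels $|\det Ad_{g}^{\sharp}|$, and invoke $\exp(Ad_{g}v)=g\exp(v)g^{-1}$. Your explicit justification of the intermediate $\theta$-integrated formula is a small clarity gain over the paper, which states it without comment.
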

 
\begin{proof}
Also in this case the proof of the statement follows from a direct
computation. Using Lemma \ref{lem:four_trans} to compute $I=e^{+\frac{i}{\hbar}\langle\xi,x\rangle}J^{a}(Ad_{g}^{\sharp}f)(x,\xi)$,
we obtain 
\begin{eqnarray*}
I & = & \int_{\g}\mathcal{F}_{\hbar}(Ad_{g}^{\sharp}f)(v)a_{\exp(v)}(x,\xi)e^{\frac{i}{\hbar}\langle\xi,\varphi_{\exp(-v)}(x)\rangle}\frac{dv}{(2\pi\hbar)^{\frac{n}{2}}}\\
 & = & \int_{\g}\vert\det\: Ad_{g}^{\sharp}\vert\mathcal{F}_{\hbar}f(Ad_{g^{-1}}v)a_{\exp(v)}(x,\xi)e^{\frac{i}{\hbar}\langle\xi,\varphi_{\exp(-v)}(x)\rangle}\frac{dv}{(2\pi\hbar)^{\frac{n}{2}}}\\
 & = & \int_{\g}\vert\det\: Ad_{g}^{\sharp}\vert\vert\det\: Ad_{g}\vert\mathcal{F}_{\hbar}f(\tilde{v})a_{g\exp(\tilde{v})g^{-1}}(x,\xi)e^{\frac{i}{\hbar}\langle\xi,\varphi_{g\exp(-v)g^{-1}}(x)\rangle}\frac{d\tilde{v}}{(2\pi\hbar)^{\frac{n}{2}}},
\end{eqnarray*}
where we used the change of variable $\tilde{v}=Ad_{g^{-1}}(v)$ and
the identities
\[
\exp(Ad_{g}(v))=g\exp(v)g^{-1}\quad\textrm{and}\quad\vert\det\: Ad_{g}^{\sharp}\vert\vert\det\: Ad_{g}\vert=1,
\]
which hold for every $v\in\g$ and $g\in G$.
\end{proof}

We can now conclude the proof of Theorem \ref{theo:inv}. Using Lemma
\ref{lem:side} to compute $K=\Op\big(J^{a}(Ad_{g}^{\sharp}f)\big)$,
we obtain
\begin{eqnarray*}
(K\psi)(x) & = & \int_{\bR^{2d}}\psi(\overline{x})J^{a}(Ad_{g}^{\sharp}f)(x,\overline{\xi})e^{\frac{i}{\hbar}\langle\overline{\xi},(x-\overline{x})\rangle}\frac{d\overline{x}d\overline{\xi}}{(2\pi\hbar)^{\frac{d}{2}}}\\
 & = & \int_{\g}\int_{\bR^{2d}}\psi(\overline{x})\mathcal{F}_{\hbar}f(v)a_{g\exp(-v)g^{-1}}(x,\overline{\xi})e^{\frac{i}{\hbar}\langle\overline{\xi},\varphi_{g\exp(-v)g^{-1}}(x)-\bar{x}\rangle}\frac{dvd\overline{x}d\overline{\xi}}{(2\pi\hbar)^{\frac{d+n}{2}}}\\
 & = & \int_{\g}\mathcal{F}_{\hbar}f(v)\big(T_{g\exp(v)g^{-1}}^{a}\psi\big)(x)\frac{dv}{(2\pi\hbar)^{\frac{n}{2}}}.
\end{eqnarray*}
Since $a$ is a formal $G$-system, we have
\[
\big(T_{g\exp(v)g^{-1}}^{a}\psi\big)(x)=\big(T_{g}^{a}T_{\exp(v)}^{a}T_{g^{-1}}^{a}\psi\big)(x)
\]
for all $g\in G$, $v\in\g$ and $\psi\in C^{\infty}(\bR^{d})$, and
thus that $K=T_{g}^{a}LT_{g^{-1}}$, where $L$ is the operator defined
by
\[
L=\int_{\g}\mathcal{F}_{\hbar}f(v)T_{\exp(v)}^{a}\frac{dv}{(2\pi\hbar)^{\frac{n}{2}}}.
\]
Let us now compute its action on functions:
\begin{eqnarray*}
(L\psi)(x) & = & \int_{\g}\mathcal{F}_{\hbar}f(v)T_{\exp(v)}^{a}\psi(x)\frac{dv}{(2\pi\hbar)^{\frac{n}{2}}}\\
 & = & \int_{\g}\mathcal{F}_{\hbar}f(v)\bigg[\int_{\bR^{2d}}\psi(\overline{x})a_{\exp(v)}(x,\overline{\xi})e^{\frac{i}{\hbar}\langle\overline{\xi},\varphi_{\exp(-v)}(x)-\overline{x}\rangle}\frac{d\overline{x}d\overline{\xi}}{(2\pi\hbar)^{\frac{d}{2}}}\bigg]\frac{dv}{(2\pi\hbar)^{\frac{n}{2}}}
\end{eqnarray*}
Unwrapping the Fourier transform in this last expression, we obtain
\[
\int_{\bR^{2d}}\psi(\overline{x})\bigg[e^{-\frac{i}{\hbar}\langle\overline{\xi},x\rangle}\int_{\g}\mathcal{F}_{\hbar}(v)a_{\exp(v)}(x,\overline{\xi})e^{\frac{i}{\hbar}\langle\overline{\xi},\varphi_{\exp(-v)}(x)\rangle}\frac{dv}{(2\pi\hbar)^{\frac{n}{2}}}\bigg]e^{\frac{i}{\hbar}\langle\overline{\xi},(x-\overline{x})\rangle}\frac{d\overline{x}d\overline{\xi}}{(2\pi\hbar)^{\frac{d}{2}}},
\]
which we recognize to be $\Op\big(J^{a}(f)\big)\psi(x)$. Thus $K=T_{g}^{a}LT_{g^{-1}}$
is exactly what we wanted to prove: namely,
\[
\Op\big(J^{a}(Ad_{g}^{\sharp}f)\big)=T_{g}^{a}\Op\big(J^{a}(f)\big)T_{g^{-1}}^{a}.
\]


\begin{thebibliography}{10}
\bibitem{BFFLS} M. Bayen, M. Flato, C. Fronsdal, A., Lichnerowicz,
and D. Sternheimer, Deformation theory and quantization I and II,
\textit{Ann. Phys.} \textbf{111} (1977), 61-\textendash{}151 .

\bibitem{BA} N. Ben Amar, A comparison between Rieffel's and Kontsevich's
deformation quantizations for linear Poisson tensors, \textit{Pacific
J. of Math. }\textbf{229} (2007).

\bibitem{Burzstyn}H. Bursztyn, Momentum maps, dual pairs and reduction
in deformation quantization, \href{http://www.math.berkeley.edu/~alanw/277papers00/bursztyn.pdf}{http://www.math.berkeley.edu/$\sim$alanw/277papers00/bursztyn.pdf}.

\bibitem{Feynman} T. Johnson-Freyd, The formal path integral and
quantum mechanics, \textit{J. Math. Phys.} \textbf{51} (2010).

\bibitem{comorphisms} A.S. Cattaneo, B. Dherin, and A. Weinstein,
Integration of Lie algebroid comorphisms, eprint arXiv:1210.4443.

\bibitem{SMI} A.S. Cattaneo, B. Dherin, and A. Weinstein, Symplectic
microgeometry I: micromorphisms, \textit{J. Symplectic Geom.} \textbf{8}
(2010), no. 2, 205--223.

\bibitem{SMII} A.S. Cattaneo, B. Dherin, and A. Weinstein, Symplectic
microgeometry II: generating functions, \textit{Bull. Braz. Math.
Society} \textbf{4} (2011), 507\textendash{}536.

\bibitem{SMIII} A. S. Cattaneo, B. Dherin and Alan Weinstein, Symplectic
microgeometry III: monoids, to appear in \textit{J. Symplectic Geom}.

\bibitem{SMIV} A. S. Cattaneo, B. Dherin and Alan Weinstein, Symplectic
microgeometry IV: quantization, in preparation.

\bibitem{BI} B. Dherin and I. Mencattini, Quantization of (volume-preserving)
actions on $\bR^{d}$, eprint, arXiv1202.0886.

\bibitem{EZ} M. Zworski, \textit{Semiclassical Analysis}, Graduate
Studies in Mathematics \textbf{138}, AMS.

\bibitem{Fedosov} B. Fedosov, Non-abelian reduction in deformation
quantization. \textit{Lett. Math. Phys.} \textbf{43} (1998), 137\textendash{}-154.

\bibitem{GS} V. Guillemin and S. Sternberg, Geometric quantization
and multiplicities of group representations, \textit{Invent. Math.}
\textbf{67} (1982), 515--538.

\bibitem{GSb} V. Guillemin and S. Sternberg, S\textit{ymplectic Tecniques
in Physics}, Cambridge University Press, (1990).

\bibitem{Gu} S. Gutt, An explicit $\ast$-product on the cotangent
bundle of a Lie group, \textit{Lett. in Math. Phys.} \textbf{7} (1983),
249--258,

\bibitem{Hamachi} K. Hamachi, Differentiability of quantum moment
maps and $G$-invariant star products, \textit{Pacific J. of Math.
}\textbf{216}\textit{ (2004).}

\bibitem{KI} A. A. Kirillov, The orbit method I. Geometric quantization,
\textit{Representation Theory of Groups and Algebras}, Contemp. Math.\textbf{
}(1993)\textbf{ 145,} 1--32 .

\bibitem{Ko} B. Kostant, \textit{Quantization and Unitary Representations},
Lect. Notes in Math. \textbf{170} (1970), 87--208.

\bibitem{La} L. P. Landsman and M. Rieffel, Induction as generalized
Marsden-Weinstein reduction, \textit{J. of Geom. and Phys}. \textbf{15}
(1995), 285--319.

\bibitem{LaM} L. P. Landsman, \textit{Mathematical Topics between
Classical and Quantum Mechanics}, Springer Monographs in Math., Springer-Verlag
(1998).

\bibitem{M} A. Martinez, \textit{An Introduction to Semiclassical
and Microlocal Analysis}, Springer-Verlag (2001).

\bibitem{Rieffel} M. A. Rieffel, Deformation quantization for actions
of $\bR^{d}$, \textit{Mem. Amer. Math. Soc.}\textbf{ 106} (1993).

\bibitem{Rieffel2} M. A. Rieffel, Deformation quantization of Heisenberg
manifolds, \textit{Comm. Math. Phys.} \textbf{122} (1989), 531--562.

\bibitem{Lu} J. H. Lu, Momentum maps at the quantum level, \textit{Comm.
Math. Phys.} \textbf{157} (1993), 389-404.

\bibitem{Xu} P. Xu, Fedosov \textasteriskcentered{}-products and
quantum momentum maps, \textit{Comm. Math. Phys.} \textbf{197} (1998),
167--197. \end{thebibliography}
\end{document}